\theoremstyle{thmstyleone}%
\newtheorem{theorem}{Theorem}%  meant for continuous numbers
\newtheorem{proposition}[theorem]{Proposition}%
\theoremstyle{thmstyletwo}%
\newtheorem{remark}{Remark}%
\theoremstyle{thmstylethree}%
\newtheorem{definition}{Definition}%
\begin{document}

\title[The theory of screws derived from a module]
 {The theory of screws derived from a module \\ over the dual numbers}
%%=============================================================%%
%% Prefix	-> \pfx{Dr}
%% GivenName	-> \fnm{Joergen W.}
%% Particle	-> \spfx{van der} -> surname prefix
%% FamilyName	-> \sur{Ploeg}
%% Suffix	-> \sfx{IV}
%% NatureName	-> \tanm{Poet Laureate} -> Title after name
%% Degrees	-> \dgr{MSc, PhD}
%% \author*[1,2]{\pfx{Dr} \fnm{Joergen W.} \spfx{van der} \sur{Ploeg} \sfx{IV} \tanm{Poet Laureate}
%%                 \dgr{MSc, PhD}}\email{iauthor@gmail.com}
%%=============================================================%%

\author{\fnm{Ettore} \sur{Minguzzi}}\email{ettore.minguzzi@unifi.it}

\affil{\orgdiv{Dipartimento di Matematica e Informatica ``U. Dini''}, \orgname{Universit\`a
degli Studi di Firenze}, \orgaddress{\street{Via S. Marta 3}, \city{Firenze}, \postcode{I-50139},  \country{Italy}}}

%%==================================%%
%% sample for unstructured abstract %%
%%==================================%%

\abstract{The theory of screws clarifies many analogies between apparently unrelated notions in mechanics, including the duality between forces and angular velocities.
It is known that the real 6-dimensional space of screws can be endowed with an operator $\mathcal{E}$, $\mathcal{E}^2=0$, that converts it into a rank 3 free module over the dual numbers. In this paper we prove the converse, namely, given a rank 3 free module over the dual numbers, endowed with orientation and a suitable scalar product ($\mathbb{D}$-module geometry), we show that it is possible to define, in a canonical way, a Euclidean space so that each element of the module is represented by a screw vector field over it. The new approach has the effectiveness of motor calculus while being independent of any reduction point. It gives insights into the transference principle by showing that affine space geometry is basically vector space geometry over the dual numbers. The main results of screw theory are then recovered by using this point of view.}

\maketitle
\tableofcontents

\section{Introduction}

The goal of this work is to provide a new type of introduction to the mathematics of screw theory, one that, we believe, has the merit of elucidating its foundations while preserving, and actually improving, effectiveness in calculations. Our formulation will be  based on $\mathbb{D}$-module geometry, where $\mathbb{D}$ is the ring of dual numbers.
Applications of this approach to mechanics as well as  geometry will be given in a next work.

Screw theory has a venerable history. It originated from the work of Sir R. Ball \cite{ball00} as a method for unifying several geometrical notions from  classical mechanics and received, in its early development at the end of the nineteenth century and beginning of twentieth century, contributions from eminent mathematicians including Clifford \cite{clifford73}, Study \cite{study03}, Kotel'nikov \cite{kotelnikov95} and von Mises \cite{vonmises24a,vonmises24b,wohlhart95,wohlhart02}. It was revived in the last decades for its powerfulness in dealing with applications such as robotics, particularly via von Mises motor (screw) calculus \cite{brand47,dimentberg68}.

Motor calculus makes use of dual numbers, dual vectors and dual matrices. For an introduction to motor calculus see \cite{veldkamp76,bottema79,mccarthy90,fischer99}, while for interesting applications to mechanics including use of dual Euler angles, the reader is referred to \cite{yang64,ying02,pennock02,wittenburg16}. For a  study of the spectral properties of dual matrices, the reader is referred to the recent work by Gutin \cite{gutin21}.

We can recognize two main types, essentially equivalent, forms of presentation of screw theory, only differing in some minor terminological aspects. Both share the same starting point, that  is, the definition of the ambient space, namely Euclidean space\\
\begin{itemize}
\item[($\star$)] The {\em Euclidean space} $E$ is an affine space modelled over $(V,\cdot, \textrm{or})$, where $V$ is a real vector space of dimension 3 endowed with a scalar product  and an orientation.\\
\end{itemize}
The space $V$ is also called the {\em vector space of free vectors}.

 A pair $(P,(b_1,b_2,b_3))$ where $(b_1,b_2,b_3)$ is a basis for $V$ and $P\in E$, is said to be a {\em reference frame} for $E$.

The first approach alluded to above has been  presented in my previous work \cite{minguzzi12} to which the reader is referred for the kinematical and mechanical interpretations omitted in this work. It  can be conveniently introduced starting from  the following important result \cite{delassus13,ramis90}:\\

\begin{theorem}[Delassus]
For a vector field $\mathcal{s}\colon E\to V$, $P\mapsto \mathcal{s}(P)$, the  following conditions are equivalent
\begin{itemize}
\item[(a)]  for every $P,Q\in E$
\begin{equation}
 \mathcal{s}(P)\cdot (Q-P)=\mathcal{s}(Q)\cdot (Q-P), \qquad \textrm{(equiprojective condition)}
\end{equation}
\item[(b)]  there is $s\in V$ such that for every $P,Q\in E$
\begin{equation} \label{bid}
\mathcal{s}(Q)-\mathcal{s}(P)=s\times (Q-P). \qquad \textrm{(constitutive equation)}
\end{equation}
\end{itemize}
\end{theorem}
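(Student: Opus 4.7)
The plan is to prove the two implications separately. The direction $(b)\Rightarrow(a)$ is the easy one: from the constitutive equation,
\[
[\mathcal{s}(Q)-\mathcal{s}(P)]\cdot (Q-P) = [s\times (Q-P)]\cdot (Q-P)=0,
\]
since $s\times (Q-P)$ is orthogonal to $Q-P$. Rearranging yields the equiprojective condition.

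For the converse $(a)\Rightarrow(b)$, I would fix an arbitrary reference point $O\in E$ and introduce the map $\Phi\colon V\to V$ defined by $\Phi(v):=\mathcal{s}(O+v)-\mathcal{s}(O)$. The strategy is to show that $\Phi$ is a linear endomorphism of $V$ that is antisymmetric with respect to the scalar product, so that the orientation of $V$ supplies a unique $s\in V$ with $\Phi(v)=s\times v$. The constitutive equation then follows for arbitrary $P,Q\in E$ by writing $P=O+u$, $Q=O+v$ and taking the difference $\mathcal{s}(Q)-\mathcal{s}(P)=\Phi(v)-\Phi(u)=s\times(v-u)$.

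Two specializations of (a) provide everything needed. Taking $P=O$, $Q=O+v$ gives $\Phi(v)\cdot v=0$ for every $v\in V$. Taking $P=O+u$, $Q=O+v$, using $\mathcal{s}(O+v)-\mathcal{s}(O+u)=\Phi(v)-\Phi(u)$ and expanding the equiprojective identity, the diagonal terms $\Phi(v)\cdot v$ and $\Phi(u)\cdot u$ drop out and one is left with the polarization identity
\[
\Phi(v)\cdot u+\Phi(u)\cdot v=0.
\]
Now define the real-valued map $B(v,w):=\Phi(v)\cdot w$. It is manifestly linear in $w$; the polarization identity reads $B(v,w)=-B(w,v)$, and since $B(w,v)$ is linear in $v$, so is $B(v,w)$. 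Thus $B$ is a bilinear antisymmetric form on $V$, and on an oriented $3$-dimensional Euclidean space every such form is uniquely of the shape $B(v,w)=(s\times v)\cdot w$ for some $s\in V$. Non-degeneracy of the scalar product forces $\Phi(v)=s\times v$, as required.

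The genuinely non-trivial point is the derivation of the bilinearity of $B$: the linearity of $\Phi$ is not assumed a priori and is not directly visible from the equiprojective condition; it emerges only through the coupling of the antisymmetry relation with the trivial linearity of $\Phi(v)\cdot w$ in its second slot. Once that is in place, the passage from an antisymmetric bilinear form to a vector via the cross product is the standard Hodge-dual identification $\Lambda^2 V\cong V$ furnished by the orientation, and the rest is bookkeeping.
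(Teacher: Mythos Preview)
Your proof is correct. The direction $(b)\Rightarrow(a)$ is immediate, and for $(a)\Rightarrow(b)$ the trick of reading off the linearity of $B(v,w)=\Phi(v)\cdot w$ in its \emph{first} slot from the antisymmetry relation together with the obvious linearity in the second slot is clean and sound; non-degeneracy of the scalar product then gives $\Phi(v)=s\times v$, and the constitutive equation for arbitrary $P,Q$ follows by subtraction as you say.

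As for comparison with the paper: there is nothing to compare. The paper does not prove Delassus' theorem; it simply states it as a classical result, citing \cite{delassus13,ramis90}, and uses it as the starting point for the vector-field presentation of screw theory. Your argument would fit well as the missing proof.
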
 $\empty$

A field that satisfies any of these equivalent conditions is called a {\em screw}  and Eq.\ (\ref{bid}) is then called the {\em constitutive equation of the screw}.

We mention that there are two major physical realizations
of screws, namely twists and wrenches, for which more details can be found, e.g., in \cite{minguzzi12}. Twists describe, at a given time, the velocity field of points belonging to a rigid body in motion. Wrenches describe the field of mechanical momentum as the pole is varied over Euclidean space.
The current document focuses on a geometric-algebraic approach where these two physical interpretations are not of  primary interest.

%In the first approach alluded to above,  one considers the 6-dimensional real vector space of equiprojective vector fields $\mathcal{s}: E\to V$, $P\mapsto \mathcal{s}(P)$, i.e.\ fields such that
%\[
%\forall P,Q\in E, \qquad \mathcal{s}(P)\cdot (Q-P)=\mathcal{s}(Q)\cdot (Q-P).
%\]
%These fields are called {\em screws}.

%By Delassus' theorem \cite{delassus13,ramis90} they can be equivalently characterized via the property: there is $s\in V$ such that for every $P,Q\in %E$
%\begin{equation} \label{bid}
%\mathcal{s}(Q)-\mathcal{s}(P)=s\times (Q-P).
%\end{equation}
It is easy to prove that $s$ in (b) is unique and hence we call it {\em resultant} of the screw. The screws form a 6-dimensional real  vector space $\mathcal{S}$, and the map $\mathcal{R}\colon \mathcal{S}\to V$, $\mathcal{s}\to s$, which associates to a screw its resultant, is linear.
%The previous equation in display can be called {\em the constitutive equation of the screw}.

This approach to screw theory via special vector fields has the advantage of being independent of any reduction point. It goes back, at least, to Lovell III's PhD thesis \cite{lovell86} and has long been used in French engineering universities where screws are called {\em torseurs} \cite{ramis90,chevallier91,pommier10,minguzzi12,chevallier18}. We stress that in this work the screws will always be elements of a 6-dimensional real  vector space, hence our notion of screw is equivalent to that of torseur, and conforms with the  terminology used by Dimentberg \cite{dimentberg68}. This terminology differs from the original terminology used by Sir R. Ball.

%Indeed, in mechanics there appear at least three type of screws, namely (a) the field ${\bm v}(P)$ of  velocity of points of a rigidly moving frame/body, (b) the field  ${\bm L}(P)$ of angular momentum for a system of  point particles (or continuum), and (c) the field ${\bm M}(P)$ of mechanical momentum induced by a system of applied forces.

A screw with zero resultant is  a constant vector field. A screw with non vanishing resultant is also called {\em proper screw}. It   is invariant under translations along the resultant $\mathcal{s}(P+\lambda s)=\mathcal{s}(P)$, for each $\lambda \in \mathbb{R}$,  and is such that the locus $\{P:  \mathcal{s}(P)\propto s\}$ is a line called {\em the screw axis} (the symbol $\propto$ means `proportional to'). The value of the screw on the axis is called {\em vector invariant} $\mathcal{s}_\parallel$, and the number $p\in \mathbb{R}$ such that $\mathcal{s}_\parallel = p s$ is called {\em pitch}. Denoting with $O$ any point on the axis, we have from Eq.\ (\ref{bid})
\begin{equation}
\mathcal{s}(P)=\mathcal{s}_\parallel+s\times (P-O) ,
\end{equation}
namely the screw is indeed a field with a screw development over $E$, with a `translational' component $\mathcal{s}_\parallel$ and a `rotational' component $s\times (P-O)$. Pictorially, a screw can be represented as two vectors, namely $s$ and $\mathcal{s}_\parallel$ aligned on the screw axis as from this data the screw can be recovered.

Clearly, via Eq.\ (\ref{bid}) a screw can be recovered from the knowledge of the pair $(s, \mathcal{s}(P))$ where $P$ is any point of $E$. For any $P\in E$ the pair  $(s, \mathcal{s}(P))$ is called {\em motor} and the map $\mathcal{S}\to V\times V$,  $\mathcal{s}\mapsto (s, \mathcal{s}(P))$, is called {\em motor reduction} at $P$ (and $P$ is also referred as the {\em reduction point}) \cite{dimentberg68}. A choice of basis for $V$ provides us with six real numbers called {\em Pl\"ucker coordinates} of the screw. The motor approach is the second approach mentioned above. In it all expressions refer to the chosen reduction point. The independence of some constructions of the reduction point is less obvious but, as a formalism,  it is more closely related to motor calculus that we shall introduce in a moment.

Given two screws, we have two useful operations on them. Firstly, a {\em screw scalar product (comoment)} $\langle,\rangle:\mathcal{S}\times \mathcal{S}\to \mathbb{R}$ defined through
\begin{equation}
\langle \mathcal{s}_1, \mathcal{s}_2\rangle:= s_1\cdot \mathcal{s}_2(P)+ \mathcal{s}_1(P) \cdot s_2
\end{equation}
where, thanks to Eq.\ (\ref{bid}), the right-hand side is actually independent of $P\in E$, and secondly, the {\em commutator} of screws $[,]: \mathcal{S}\times \mathcal{S}\to \mathcal{S} $ which is the negative (of the) Lie bracket
\begin{equation}
[\mathcal{s}_1,\mathcal{s}_2](P):=-[\mathcal{s}_1,\mathcal{s}_2]_{L}(P)=s_1\times  \mathcal{s}_2(P)+\mathcal{s}_1(P) \times s_2.
\end{equation}
Again, by using Eq.\ (\ref{bid}) it is easy to check that it is a screw whose resultant is $s_1\times s_2$.

Oriented lines on $E$ are just particular screws, sometimes called {\em spears} or {\em unit line vectors} \cite{huang13}. Let $O$ be a point on the line and let $e\in V$ be a unit vector giving the orientation. The screw $\ell(P):=e \times (P-O)$ is independent of $O$ and can be used to represent the oriented line on $E$.

We have the formula
\begin{align}
\langle \ell_1,\ell_2\rangle=-d \sin \theta
\end{align}
where $\theta$ is the angle between $e_1$ and $e_2$ and $d$ is the distance between the lines, and the formula
\begin{equation}
[\ell_1,\ell_2](P)=\cos \theta (B-A)+(e_1\times e_2)\times (P-A)
\end{equation}
where $A$, $B$ are closest points, with $A$ in the axis of $\ell_1$ and $B$ in the axis of $\ell_2$.

It is now very useful to introduce the commutative ring of dual numbers $\mathbb{D}=\mathbb{R}[\epsilon]/(\epsilon^2)$. Introducing the symbol $\epsilon$ and the rule $\epsilon^2=0$, we can identify $\mathbb{D}$ with the {\em dual numbers}
\[
\mathbb{D}=\{a+\epsilon b: \ a,b\in \mathbb{R}\}=\mathbb{R}+\epsilon \mathbb{R}.
\]
Given a dual number $x=a+\epsilon b$ we define $\mathfrak{Re}(x)=a$ and $\mathfrak{Du}(x)=b$ and call the former the real part, and the latter the dual part. The {\em conjugate} of $x=a+\epsilon b$ is $x^*=a-\epsilon b$.
The inverse (necessarily unique) exists only for $\mathfrak{Re}(x)=a\ne 0$ and is given by $x^{-1}=\frac{x^*}{x^* x}=\frac{a-\epsilon b}{a^2}$.
%\[
%x^{-1}=\frac{x^*}{x^*x}=\frac{a-\epsilon b}{a^2}.
%\]
%If $x=a+\epsilon b$ is such that $a>0$ we set
%\[
%\sqrt{x}=\sqrt{a}+\epsilon \frac{b}{2\sqrt{a}},
%\]
%so that $x=(\sqrt{x})^2$.

If $f:\mathbb{R}\to\mathbb{R}$ is a real analytic function we define its extension to the dual numbers as follows  $f(a+\epsilon b):= f(a)+\epsilon b f'(a)$, i.e.\ compatibly with the Taylor expansion and the rule $\epsilon^2=0$. As the power series representation over the dual numbers does not change,
the properties of $f$ resulting from this series hold true for the dual extension as
well. As an example of extension, if $x=a+\epsilon b$ is such that $a>0$
\[
\sqrt{x}=\sqrt{a}+\epsilon \frac{b}{2\sqrt{a}},
\]
so that $x=(\sqrt{x})^2$.

Von Mises motor calculus consists in choosing a reduction point $O\in E$ and considering the rank 3 free module over the dual numbers $V\oplus \epsilon V$ whose elements are the  motors written in the form $m=s +\epsilon \mathcal{s}(O)$.  The map $\phi_O: \mathcal{S}\to V\oplus \epsilon V$ can also be called {\em motor reduction}.

We recall that a module is the closest mathematical object to a vector space when the scalars do not belong to a field but rather to a ring \cite[Sec.\ 3.1]{lang02}. Elements $\{v_i\}$ of a module, finite in number, are said to be linearly independent if $\sum_i\lambda_i v_i=0 \Rightarrow \forall i, \lambda_i=0$, where the coefficients $\lambda_i$ belong to the ring (since the coefficients $\lambda_i$ are not necessarily invertible,  linear dependence  {\em does not} imply that one element can be expressed as a linear combination of the others). The definition of {\em span} of some finite set of elements of the module is analogous to that for vector spaces, as in the notion of {\em basis}. The module is {\em free} if it admits a basis \cite[Sec.\ 3.4]{lang02}, and its {\em rank} is the cardinality of the basis, that is, the rank is the analog of  the dimension for a vector space.
%We also recall that the definii of basis does not differ from tha

Extending the scalar product and cross product to $V\oplus \epsilon V$  in the obvious way,  we get
\begin{align}
m_1\cdot m_2&= s_1\cdot s_2+ \epsilon \langle \mathcal{s}_1, \mathcal{s}_2 \rangle, \label{vxe}\\
m_1\times m_2&=s_1\times s_2+\epsilon [\mathcal{s}_1,\mathcal{s}_2](O) ,  \label{vxr}
\end{align}
where the latter equation tells us that the motor reduction of the commutator is the cross product of the motor reductions.

For any  two oriented lines $\ell_1$ and $\ell_2$ let us introduce  Study's dual angle $\Theta=\theta+\epsilon d$, which gives information on both the relative orientation and distance of the lines. The nice fact is that for lines the formulas (\ref{vxe})-(\ref{vxr}) can be written
\begin{align}
\ell_1\cdot \ell_2&= \cos \Theta, \label{one} \\
\ell_1\times \ell_2&=\sin \Theta \, \ell_3 , \label{two}
\end{align}
where $\ell_3$ is the oriented line of direction $e_1\times e_2$ passing through the points of closest distance $A$ and $B$. These formulas are perfectly analogous to the formulas for scalar and cross product in vector geometry but retain information on the `affinity' of the geometric objects, namely on their position in affine space. For instance, two lines are orthogonal when their directions are orthogonal (i.e.\ $e_1\cdot e_2=0$) and they intersect.

Motor calculus has proved particularly convenient in robotics where we can consider a basis  of applied orthonormal vectors on each rigid body of the system. In terms of motors the condition of orthonormality reads $m_i\cdot m_j=\delta_{ij}$, but now the motors $m_i$ are not free vectors and  hence retain all the information on the origin of the frame. A change in the applied frame $m'_i=U_{ij} m_j$ is then mediated by an orthogonal matrix of dual numbers: $U^T U=I$. The matrix does not depend on the reduction points because  $U_{ij}=m'_i\cdot m_j$ which by Eq.\ (\ref{vxe}) depends on the two screws with motors $m'_i$, $m_j$ but is independent of any reduction point.
 The kinematics of the robot becomes ultimately described by a composition of $3\times 3$ dual matrices.

Motor calculus shows how screws can be considered as elements of a rank 3 module where, however, the whole motor representation depends on the chosen reduction point, so that the independence of some constructions of the reduction point is not so clear at first sight. A very interesting approach was introduced by Chevallier \cite{chevallier91} who was able to remedy this problem. The idea is to introduce the map  (cf.\ \cite{chevallier91}, operation $(V)$)
\[
\mathcal{E}\colon \mathcal{S}\to \mathcal{S}, \quad \mathcal{s}\mapsto s,
\]
which associates to each screw a constant screw given at each point by the resultant. Clearly, every constant screw has zero resultant thus $\mathcal{E}^2=0$. Chevallier  shows that the 6-dimensional real vector space $\mathcal{S}$ can be seen as a rank 3 free module over $\mathbb{D}$ by defining (cf.\ \cite{chevallier91}, equation (18))
\begin{equation} \label{jgg}
\epsilon \mathcal{s}:=\mathcal{E}(\mathcal{s}).
\end{equation}
In fact, thanks to this definition, it now makes sense to take linear combinations with dual number coefficients of screws, e.g.\ $(a+\epsilon b)  \mathcal{s}_1+ (c+d \epsilon) \mathcal{s}_2$ as this operation, giving the sum of four screws, returns a screw  (for more details we refer to \cite[Theorem 1]{chevallier91}).
Observe that the motor reduction map satisfies  $\phi_O(\mathcal{E}(\mathcal{s}))=\epsilon  \phi_O(\mathcal{s})$, thus $\phi_O$ provides an isomorphism of $\mathbb{D}$-modules.

The module is free \cite{lang02} because given an orthonormal positively oriented frame $(P, (e_1,e_2,e_3))$ of $E$, the triple of orthogonal oriented lines $(\ell_1,\ell_2,\ell_3)$ provides a basis for the module $\mathcal{S}$, that is, any screw  $\mathcal{s}\in \mathcal{S}$ can be written in a unique way in the form
%\footnote{A module need not be free, for instance, $\epsilon \mathbb{R}$, which is a module over $\mathbb{D}$, is not free as the equation $\lambda v=0$, for $v\in \epsilon \mathbb{R}$, admits the solution $\lambda=\epsilon$. It follows that $\epsilon \mathbb{R}$ has no basis as linear independence fails.}
\[
\mathcal{s}=\sum_{i=1}^3 (a_i+b_i \epsilon) \, e_i
\]
where $(a_i+b_i \epsilon)\in \mathbb{D}$ (here $(a_1,a_2,a_3)$ are the components of the resultant $s$ of $\mathcal{s}$ while $(b_1,b_2,b_3)$ are the components of $\mathcal{s}(P)$).
We can also induce an orientation on the module $\mathcal{S}$ by specifying which bases are {\em positive} as follows
%by stipulating that

\begin{quote} \label{jgt} {\normalsize
Let  $(e_1,e_2,e_3)$  be a positive orthonormal reference frame for $(E,\cdot, $ $ or)$, then we say that on $\mathcal{S}$ the basis
$(\ell_1, \ell_2, \ell_2)$ is {\em positive} together with all the other bases connected  to it by  dual matrices whose real part has positive determinant.}\\
\end{quote}

This definition of {\em positive basis} on $\mathcal{S}$ is well posed as independent of the starting reference frame.
% $\{\ell_i\}$ is positive together with all the other bases connected to this basis by  dual matrices whose real part has positive determinant.

Chevallier also observes that if we define $\circ\colon \mathcal{S}\times \mathcal{S} \to \mathbb{D}$ through
\begin{equation} \label{jgy}
\mathcal{s}_1\circ \mathcal{s}_2=s_1\cdot s_2 + \epsilon\, \langle \mathcal{s}_1, \mathcal{s}_2\rangle
\end{equation}
then $\mathcal{E}(\mathcal{s}_1)\circ \mathcal{s}_2=\epsilon \mathcal{s}_1\circ \mathcal{s}_2$, namely $\circ$ is $\mathbb{D}$-bilinear. Thus $\mathcal{S}$ can be seen as a rank 3 free module over $\mathbb{D}$ endowed with an orientation and a $\mathbb{D}$-bilinear {\em scalar product}.

This scalar product  terminology should be clarified (see also Def.\ \ref{bas}).
% since it has not been introduced before.
Observe that $\mathcal{E}(\mathcal{s})\circ \mathcal{E}(\mathcal{s})=\epsilon^2 \mathcal{s}\circ \mathcal{s}=0$ thus the elements in $\mathcal{E}(\mathcal{S})$ have vanishing $\circ$-square. By scalar product we mean that for every $\mathcal{s}\in \mathcal{S}$, $\mathfrak{Re}(\mathcal{s}\circ \mathcal{s}) \ge 0$, with equality holding iff $\mathcal{s}\in \mathcal{E}(\mathcal{S})$. This is indeed the case because $\mathfrak{Re}(\mathcal{s}\circ \mathcal{s})=s \cdot s$, and since $\cdot$ is a scalar product, $s=0$, which means indeed $\mathcal{s}\in \mathcal{E}(\mathcal{S})$.

Chevallier also obtains other results connected to the Lie algebra structure of $\mathcal{S}$ induced by the cross product \cite{chevallier91}. These results, that are here not mentioned, have a less prominent role in our treatment as they will be deduced from the main structure with which we shall work. In summary, Chevallier proved\\

\begin{theorem}[Chevallier \cite{chevallier91}] \label{bjq}
On $(E, \cdot, \textrm{or})$ the vector space of screws $\mathcal{S}$ can be given via Eqs.\ (\ref{jgg})-(\ref{jgy}) the structure of a rank 3 free $\mathbb{D}$-module endowed with an orientation and a scalar product (which we call the natural $\mathbb{D}$-module geometry structure of $\mathcal{S}$).
\end{theorem}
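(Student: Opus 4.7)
The proof decomposes naturally into four verifications: (i) that $\epsilon\mathcal{s}:=\mathcal{E}(\mathcal{s})$ extends the real vector space $\mathcal{S}$ to a $\mathbb{D}$-module, (ii) that the resulting module is free of rank three, (iii) that the prescribed orientation is well-defined, and (iv) that $\circ$ is a $\mathbb{D}$-bilinear scalar product in the sense explained just above the statement. The two structural ingredients on which everything rests are the $\mathbb{R}$-linearity of the resultant map (hence of $\mathcal{E}$) and the nilpotency $\mathcal{E}^2=0$, which exactly mirrors $\epsilon^2=0$ on the scalar side.

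For (i) I set $(a+b\epsilon)\mathcal{s}:=a\mathcal{s}+b\,\mathcal{E}(\mathcal{s})$ and check the module axioms. Distributivity and compatibility with the real actions are immediate from $\mathbb{R}$-linearity of $\mathcal{E}$; associativity of scalar multiplication is the only non-automatic axiom, and expanding $(a+b\epsilon)\bigl((c+d\epsilon)\mathcal{s}\bigr)$ produces a term $bd\,\mathcal{E}^2(\mathcal{s})$ which vanishes, leaving precisely $ac\,\mathcal{s}+(ad+bc)\mathcal{E}(\mathcal{s})=\bigl((a+b\epsilon)(c+d\epsilon)\bigr)\mathcal{s}$. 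For (ii), I fix a positive orthonormal frame $(P,(e_1,e_2,e_3))$ and form the oriented lines $\ell_i(Q):=e_i\times(Q-P)$, noting that $\mathcal{E}(\ell_i)$ is the constant screw $e_i$. Any screw with resultant $s=\sum a_ie_i$ and value $\mathcal{s}(P)=\sum b_ie_i$ then satisfies $\sum_i(a_i+b_i\epsilon)\ell_i(Q)=s\times(Q-P)+\mathcal{s}(P)$, which equals $\mathcal{s}(Q)$ by the constitutive equation. Uniqueness is obtained by first taking the resultant to force $a_i=0$, after which independence of the $e_i$ in $V$ forces $b_i=0$; hence $(\ell_1,\ell_2,\ell_3)$ is a $\mathbb{D}$-basis and $\mathcal{S}$ has rank three.

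For (iii) the class of positive bases must not depend on the initial positive orthonormal frame of $E$. Two such frames are related by some $U\in SO(3)$, which acts on the associated triples $(\ell_1,\ell_2,\ell_3)$ as a dual matrix with real part $U$ and $\det U=+1$; the two triples therefore lie in the same class. Transitivity of the prescribed relation reduces to the identity $\mathfrak{Re}(AB)=\mathfrak{Re}(A)\,\mathfrak{Re}(B)$ for dual matrices together with multiplicativity of the determinant, and reflexivity and symmetry are trivial. For (iv) the $\mathbb{R}$-bilinearity and symmetry of $\circ$ are immediate from its definition. The pivotal identity $\mathcal{E}(\mathcal{s}_1)\circ\mathcal{s}_2=\epsilon(\mathcal{s}_1\circ\mathcal{s}_2)$ follows because $\mathcal{E}(\mathcal{s}_1)$ has zero resultant and constant value $s_1$, so both sides reduce to $\epsilon\, s_1\cdot s_2$; combined with symmetry this promotes $\mathbb{R}$-bilinearity to $\mathbb{D}$-bilinearity. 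Finally $\mathfrak{Re}(\mathcal{s}\circ\mathcal{s})=s\cdot s\ge 0$ with equality precisely when $s=0$, which, since a screw has zero resultant iff it is constant, is equivalent to $\mathcal{s}\in\mathcal{E}(\mathcal{S})$, confirming the scalar product property.

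The step I expect to require the most care is (iii): one must verify not merely that two positive orthonormal frames of $E$ induce bases of $\mathcal{S}$ related by a real (hence dual) matrix with positive-determinant real part, but also that the relation ``connected by a dual matrix whose real part has positive determinant'' is a genuine equivalence relation on ordered $\mathbb{D}$-bases of $\mathcal{S}$. Both facts hinge on the behavior of real parts under products of dual matrices, so that is where I would be most attentive.
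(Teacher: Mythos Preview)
Your proof is correct and follows essentially the same route as the paper's discussion preceding the theorem (which the paper attributes to Chevallier rather than proving formally): the $\mathbb{D}$-module structure via $\mathcal{E}^2=0$, the basis $(\ell_1,\ell_2,\ell_3)$ coming from a positive orthonormal frame, the orientation via $\mathfrak{Re}(AB)=\mathfrak{Re}(A)\mathfrak{Re}(B)$, and the scalar-product check $\mathfrak{Re}(\mathcal{s}\circ\mathcal{s})=s\cdot s$. One small point in (iii): two positive orthonormal frames of $E$ differ by both a rotation $U\in SO(3)$ and a translation $P'-P$, not just $U$; the translation contributes an $\epsilon$-term to the change-of-basis matrix for the $\ell_i$, so the real part is still $U$ and your conclusion stands, but the sentence as written elides this.
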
 $\empty$

Having given this introduction it now becomes easier to state the goal of this work in a precise way, and to also comment on the advantages of our approach. The objective of this work is to prove a converse of Chevallier's result. That is, we start from the following object\\
\begin{itemize}
\item[($\star\star$)] a triple $(M,\circ, OR)$ where $M$ is a rank 3 free $\mathbb{D}$-module endowed with an orientation `OR' and a scalar product `$\circ$' (which we call {\em $\mathbb{D}$-module geometric structure} or {\em $\mathbb{D}$-module geometry} for short),\\
\end{itemize}
and from it we recover  screw theory in both the approaches through equiprojective vector fields  and motors. The core of our argument consists in showing that from $M$ a natural structure of Euclidean space $E$ can be identified and that elements of $M$ (called with some abuse of terminology screws or motors) can be associated to screw vector fields on $E$.

We prove the next result\\

\begin{theorem} \label{bje}
Let $(M,\circ, \textrm{OR})$ be a $\mathbb{D}$-module geometric structure  (cf.\ ($\star\star$)), then there is a canonical 3-dimensional Euclidean space $E$ such that, denoting with $\mathcal{S}$ its space of screws, which we regard as endowed with its natural $\mathbb{D}$-module geometry structure (cf.\ Thm.\ \ref{bjq}),
%the rank 3 free $\mathbb{D}$-module of $E$ as in Chevallier's work,
there is a canonical ($\mathbb{D}$-linear) isomorphism $\beta\colon M \to \mathcal{S}$ preserving the scalar product and the orientation (and hence the cross product).
\end{theorem}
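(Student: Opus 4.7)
The plan is to reconstruct the Euclidean space $E$ by identifying its points with the real $3$-dimensional subspaces of $M$ that play the role of reduction points, and then to read off the screw vector field attached to $m\in M$ from how $m$ decomposes relative to these subspaces.

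First I would extract the free vector space $V$ intrinsically from $M$. Since $\epsilon^{2}=0$, the submodule $\epsilon M\subset M$ is actually a real vector space, and the map $M/\epsilon M\to \epsilon M$ sending $m+\epsilon M$ to $\epsilon m$ is an $\mathbb{R}$-linear isomorphism. Because $\mathfrak{Re}(m\circ m)\ge 0$ with equality iff $m\in \epsilon M$, the real part of $\circ$ descends to a genuine Euclidean scalar product on $V:=M/\epsilon M\cong \epsilon M$, while the orientation on $M$ induces an orientation on $V$; a $3$-dimensional oriented Euclidean real vector space then carries a canonical cross product $\times$.

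Next I define the point set of $E$ to consist of all real $3$-dimensional subspaces $\mathcal{N}\subset M$ satisfying $\mathcal{N}\oplus \epsilon M=M$ as real vector spaces and $\mathfrak{Du}(n_{1}\circ n_{2})=0$ for all $n_{1},n_{2}\in \mathcal{N}$. Existence of such subspaces follows from the real span of any $\mathbb{D}$-orthonormal basis of $M$, produced by Gram--Schmidt (a dual number with positive real part is invertible and admits a square root). The crucial step is then to exhibit a canonical simply transitive action of $V$ on this set: given two such subspaces $\mathcal{N}_{1},\mathcal{N}_{2}$, each $n_{1}\in \mathcal{N}_{1}$ has a unique partner $n_{2}\in \mathcal{N}_{2}$ sharing its class in $M/\epsilon M$, and the linear endomorphism of $V$ induced by $n_{1}\mapsto n_{2}-n_{1}\in \epsilon M\cong V$ turns out to be antisymmetric---this is precisely where the isotropy of both subspaces is used---and hence, by the cross product on $V$, equal to $w\mapsto u\times w$ for a unique $u\in V$. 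Setting $\mathcal{N}_{2}-\mathcal{N}_{1}:=u$ makes $E$ into a canonical $3$-dimensional Euclidean affine space modelled on $V$.

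The isomorphism $\beta\colon M\to \mathcal{S}$ is then defined by declaring $\beta(m)(P)\in V$ to be the $\epsilon M$-component (identified with a vector of $V$) of $m$ in the decomposition $M=\mathcal{N}_{P}\oplus \epsilon M$, where $\mathcal{N}_{P}$ is the subspace corresponding to $P$. The constitutive equation~(\ref{bid}) follows from the same antisymmetry computation: if $Q-P=u$, then the representatives of $m$ in $\mathcal{N}_{P}$ and $\mathcal{N}_{Q}$, both projecting to the resultant $s=\mathfrak{Re}(m)$, differ by $\epsilon(u\times s)$, whence $\beta(m)(Q)-\beta(m)(P)=s\times(Q-P)$. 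The $\mathbb{D}$-linearity of $\beta$ is immediate from the fact that on both sides multiplication by $\epsilon$ amounts to sending an element to the constant screw given by its resultant; bijectivity follows from a dimension count, preservation of $\circ$ from direct comparison with Chevallier's formula~(\ref{jgy}), and orientation preservation is built into the way OR was descended to $V$.

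The main obstacle will be ensuring that every object produced above depends only on the intrinsic data $(\circ,\mathrm{OR})$, not on auxiliary choices such as orthonormal $\mathbb{D}$-bases or base points. The crux is the antisymmetry computation that intrinsically turns the ``difference'' of two reduction-point subspaces into a vector of $V$; once canonicity is established there, all remaining properties of $\beta$ are routine manipulations with dual-number coefficients.
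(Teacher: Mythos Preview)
Your proposal is correct and follows essentially the same route as the paper: the same definition of $V$, the same set $E$ of $\langle\,,\,\rangle$-null real $3$-planes transverse to $\epsilon M$, the same antisymmetry argument to extract the displacement vector (the paper phrases it via the matrix $D_{ij}$ of the transition between lifted orthonormal bases, you phrase it as the coordinate-free endomorphism $n_1\mapsto n_2-n_1$, but the content is identical), and the same splitting $z=a+\epsilon\,\mathcal{s}(A)$ to define $\beta$. The only thing the paper makes more explicit is the verification of Weyl's axioms (a) and (b), which you subsume under ``simply transitive''; otherwise there is nothing to add.
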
 $\empty$

%In other words, any  $\mathbb{D}$-module geometric structure is canonically realized as the natural $\mathbb{D}$-module geometric structure of the screw space of some Euclidean space.

Any  rank 3 free $\mathbb{D}$-module  endowed with a scalar product and an orientation is (non-canonically) isomorphic to $\mathbb{R}^3+\epsilon \mathbb{R}^3$, the isomorphism depending on the chosen positively oriented orthonormal basis. Thus any two such modules are (non-canonically) isomorphic. The main content of the previous theorem is  on the fact that $E$ and $\beta$ are constructed from $M$ in a canonical way. Thus the interesting aspects of the theorem are in the construction itself, for instance $E$ is defined as the set of $\langle \,,\rangle$-null real subspaces of $M$ of real dimension 3 transverse to $\epsilon M$, where  $\langle\cdot,\cdot\rangle=\mathfrak{Du}(\cdot \circ \cdot)$.

Subsequently, we show how the cross product of screws can be defined, and then all the relevant formulas  previously presented are derived going  in the reverse direction, namely moving from the module $M$ to the objects living in the Euclidean space $E$, rather than conversely. For instance, the dual angle is not constructed adjoining the distance to the ordinary angle in the combination $\theta+\epsilon d$, but rather emerges  from the definition of dual angle between two normalized screws. So equations (\ref{one})-(\ref{two}) are obtained naturally much  in analogy to what it is done in vector space geometry.

We are ready to recall the {\em transference principle} first formulated by Kotel'nikov \cite{kotelnikov95}. Broadly, it states that  most formulas of vector geometry find a formal analog in screw theory. For modern studies pointing out limitations, applications, and  more precise formalizations   of this principle the reader is referred to \cite{hsia81,selig86,martinez93,chevallier96,wittenburg16}.
%A rigorous formulation of the principle emphasizing the group theoretical aspects was obtained by Selig \cite{selig86}, see also our Sec.\ \ref{mff} for results in the same spirit.

The traditional approach to this principle is based on the rather surprising and puzzling  result that
formulas  (\ref{one})-(\ref{two}) hold true in screw theory, formulas that are then used to generalize several results in trigonometry \cite{dimentberg68}. In this approach it is difficult to grasp why (\ref{one})-(\ref{two})  had to hold in the first place, that seems to happen just by chance, and the very fact that defining the dual angle in that peculiar form brings such simplifications appears as quite surprising.

Instead, our approach makes it obvious that such formulas should hold, and in its new light the transference principle becomes no more surprising: it is sufficient to recognize that the starting structures, namely $(V,\cdot , \textrm{or})$ and  $(M,\circ, \textrm{OR})$  are pretty similar; that all the calculations for the latter find an analog in the former, and that all the calculations for the former that do not make use of coefficient invertibility, find an analog in the latter. This also clarifies that introducing the unit $\epsilon$, $\epsilon^2=0$, into the space of coefficients `removes the origin from the vector space' and brings us into the affine world while, and this is the important point, preserving linearity.

Thus by noting that the triple $(M,\circ, \textrm{OR})$ is perfectly analogous to the triple $(V,\cdot , \textrm{or})$ used in the definition of Euclidean space, and by observing that vector geometry is the geometry of the structure $(V,\cdot , \textrm{or})$, we arrive at the following broad  result which is our form of the transference principle
%\\

%\begin{theorem}[Transference principle]

\begin{quote}
{\normalsize {\bf Transference principle}: All results of vector geometry that do not use in their proof, in any essential way, the invertibility of the $\mathbb{R}$-coefficients  of the vector space $V$, find a generalization to  results about screws (hence having an affine interpretation).}
%\\
%\end{theorem} $\empty$
\end{quote}

Actually, we recall that  in a $\mathbb{D}$-module the property of linear dependence does not imply the possibility of expressing one element as a linear combination of the others. Indeed, this result uses the invertibility of the coefficients and is rather of central importance. For this reason, the generalized results about screws alluded to above, will be about {\em proper} screws, namely screws with non-vanishing resultants as only those can belong to a basis of the module (as application of $\mathcal{E}$ shows, see also the below discussion).

%In essence the previous statement is our version of the {\em transference principle} first formulated by Kotel'nikov \cite{kotelnikov95}, that broadly states that all formulas of vector geometry find a formal analog in screw theory. For modern studies pointing out limitations, applications and  more precise formalizations   of this principle the reader is referred to \cite{hsia81,selig86,martinez93,chevallier96,wittenburg16}. The traditional approach to this principle is based on the rather surprising and puzzling  result that
%formulas  (\ref{one})-(\ref{two}) hold true in screw theory, formulas that are then used to generalize several results in trigonometry \cite{dimentberg68}. In this approach it is difficult to grasp why (\ref{one})-(\ref{two})  had to hold in the first place, that seems to happen just by chance, and the very fact that defining the dual angle in that peculiar form brings such simplifications appears as quite surprising.

As we shall see, another advantage of our approach is that it preserves the effectiveness of screw calculus while removing any reference to reduction points. In fact, it shows that any positively oriented orthonormal frame in Euclidean space $E$ is nothing but a positively oriented orthonormal basis on $M$. These basis elements on $E$ should not be seen as elements of $V\oplus \epsilon V$, so there is really no need to select a reduction point. Drawing a frame on $E$ means picking a basis on $M$ and, as a result, what matters are the dual matrices relating different frames, with really no need to convert the basis elements into dual vectors.

%The practitioner in robotics might have had this point clear, but it seem to me that the standard motor  formalism  does not really support this way of proceeding.
%
%
%One last word to express   some philosophical motivations behind this study. We are particularly interested in the applications of screw theory to mechanics. Here a screw can represent different types of objects, such as (a) the vector field of a rigidly moving body/frame, (b) the field of angular momentum induced by a system of particles $\{(P_i, {\bm p}_i)\}$ where ${\bm p_i}$ is the linear momenta of the particle $P_i$, or (c) the field of mechanical momentum induced by a system of applied forces. Much of mechanics can really be rephrased  in rather elegant and compact terms via the screw formalism. Point  (b) suggests that the main inertial entities living in the space should be treated as screws. This paper can be read as presenting a philosophical approach to mechanics in which one regards as fundamental the particles, namely the entities living in the space, rather than the space itself. The space is then recovered from the  particles properties. This corresponds precisely to what we do next, namely we construct the Euclidean space $E$ from $M$.

The structure of the paper is as follows. In Section \ref{sone} we recall some basic definitions about rings and modules, and prove that the notion of {\em rank} is well defined. In Section \ref{vkd} we show how the Euclidean space emerges out of $\mathbb{D}$-module geometry, in particular, Theorem \ref{bje} is proved. In Section \ref{stre} we obtain several results of screw theory by using the formalism of $\mathbb{D}$-module geometry. In Section \ref{squa} we obtain results that, a posteriori, can be interpreted as a manifestation of the transference principle. Again, they are obtained by using the formalism of  $\mathbb{D}$-module geometry.  We end the paper with some conclusions.

As for notation and terminologies, throughout the work we use, unless otherwise stated, the Einstein summation convention. Vectors
or module elements are denoted by lowercase letters in ordinary font.
%We do not denote the elements of a 3-dimensional real vector space in boldface since such notation is not suited to the geometrical constructions that follow. Indeed,  our goal is to build the Euclidean space out of  the $\mathbb{D}$-module geometry, a boldface notation would suggest that aspects of the Euclidean structure we wish to build have been assumed in advance, which is not the case.
%For instance, elements of the module $M$ denoted in boldface would suggested that they are screw fields, which is really what we want to prove. A posteriori the boldface notation of \cite{minguzzi12} could be adopted.
In the context of a $\mathbb{D}$-module, by   {\em linear} we shall  mean $\mathbb{D}$-linear, while {\em $\mathbb{R}$-linearity} will always be written in this form.

\section{Basics on $\mathbb{D}$-modules} \label{sone}

Let us recall \cite[Sec.\ 2.1]{lang02}\\

\begin{definition}
A {\em ring} $A$ is a set, together with two laws of composition called multiplication
and addition respectively,
satisfying the following conditions:
\begin{enumerate}
\item  With respect to addition, $A$ is a commutative group.
\item The multiplication is associative, and has a unit element.
\item For all $x, y, z \in A$ we have (distributivity)
\[
(x + y)z = xz + yz.
\]
\end{enumerate}
\end{definition} $\empty$

It follows that for every $x\in A$, $x0=0$, and that $1\ne 0$ or we are in the trivial case $A=\{0\}$.
A ring $A$ is said to be {\em commutative} if $xy = yx$ for all $x, y \in A$. A ring $A$ such that $1 \ne 0$, and such that every non-zero element is
invertible is called a {\em division ring}. A commutative
division ring is called a {\em field}. Examples of fields are the real, $\mathbb{R}$, rational, $\mathbb{Q}$, and complex, $\mathbb{C}$, numbers.

%Let us consider $\mathbb{D}=\mathbb{R}^2$ endowed with the following operations $(a,b)+(c,d)=(ab,cd)$ and $(a,b)(c,d)=(ac, ad+bc)$. This is a commutative ring with unit $(1,0)$ and zero $(0,0)$ which clearly differ. Introducing the symbol $\epsilon$ and rule $\epsilon^2=0$, we can identify $\mathbb{D}$ with the {\em dual numbers}

As previously mentioned, the dual numbers
\[
\mathbb{D}=\mathbb{R}[\epsilon]/(\epsilon^2)=\{a+\epsilon b,\quad a,b\in \mathbb{R}\}=\mathbb{R}+\epsilon \mathbb{R}.
\]
form a commutative ring with unit and zero that clearly differ.\\
%Given a dual number $x=a+\epsilon b$ we define $\mathfrak{Re}(x)=a$ and $\mathfrak{Du}(x)=b$ and call the former the real part, and the latter the dual part. The {\em conjugate} of $x=a+\epsilon b$ is $x^*=a-\epsilon b$.
%The inverse (necessarily unique) exists only for $\mathfrak{Re}(x)=a\ne 0$ and is given by
%\[
%%x^{-1}=\frac{x^*}{x^*x}=\frac{a-\epsilon b}{a^2}.
%\]
%If $x=a+\epsilon b$ is such that $a>0$ we set
%\[
%\sqrt{x}=\sqrt{a}+\epsilon \frac{b}{2\sqrt{a}},
%\]
%so that $x=(\sqrt{x})^2$.\\

\begin{definition}
Suppose that $R$ is a commutative ring, and 1 is its multiplicative identity. An  $R$-module $M$ consists of an abelian group $(M, +)$ and an operation (empty symbol) $R \times M \to M$ such that for all $r, s \in R$ and $x, y \in M$, we have
\begin{align*}
 r (x+y)&= r x+r y,\\
 (r+s) x&=r x+s x, \\
 (rs) x&=r(s x), \\
 1 x&=x.
\end{align*}
\end{definition}

It is a generalization of the notion of vector space, wherein the field of scalars is replaced by a ring. It is called {\em free} if it admits a basis.
Not all modules admit a basis, for instance, $\epsilon \mathbb{R}$ is a $\mathbb{D}$-module but no element $z\in \epsilon \mathbb{R}$ gives a basis $\{z\}$ as $\lambda z=0$ admits the solution $\lambda=\epsilon\ne 0$.
It is {\em finitely generated} if there are some finite number of elements whose span is the whole $M$.

We provide the proof of the next result for completeness.\\

\begin{proposition}
For a finitely generated free $\mathbb{D}$-module $M$ every basis has the same number of elements. This number is called the rank of $M$.
\end{proposition}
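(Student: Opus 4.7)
The plan is to reduce modulo the ideal $(\epsilon)$ and transfer the question from $\mathbb{D}$-modules to $\mathbb{R}$-vector spaces, where invariance of dimension is a classical fact. The key observation is that $(\epsilon)\subset \mathbb{D}$ is an ideal, and the quotient ring $\mathbb{D}/(\epsilon)$ is naturally isomorphic to $\mathbb{R}$, because a dual number $a+\epsilon b$ is sent to its real part $a$ (with kernel exactly $\epsilon \mathbb{R}$).

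With this in hand, I form the quotient abelian group $M/\epsilon M$, where $\epsilon M=\{\epsilon x:x\in M\}$ is a submodule of $M$. Since $\epsilon\cdot(\epsilon M)=0$, the scalar action of $\mathbb{D}$ on $M$ descends to an action of $\mathbb{D}/(\epsilon)\cong \mathbb{R}$ on $M/\epsilon M$, making the quotient a real vector space. I then show that for any $\mathbb{D}$-basis $\{e_1,\ldots,e_n\}$ of $M$, the images $\{[e_1],\ldots,[e_n]\}$ form an $\mathbb{R}$-basis of $M/\epsilon M$. Spanning is immediate: every $x\in M$ can be written as $\sum_i (a_i+\epsilon b_i)\,e_i$, so $[x]=\sum_i a_i[e_i]$. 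For linear independence over $\mathbb{R}$, suppose $\sum_i a_i[e_i]=0$ with $a_i\in \mathbb{R}$. Then $\sum_i a_i e_i\in \epsilon M$, so there exist $c_i\in \mathbb{D}$ with $\sum_i a_i e_i=\sum_i (\epsilon c_i)\,e_i$, i.e.\ $\sum_i (a_i-\epsilon c_i)\,e_i=0$. The $\mathbb{D}$-linear independence of the basis forces $a_i-\epsilon c_i=0$ in $\mathbb{D}$; taking real parts gives $a_i=0$.

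It follows that the cardinality of any $\mathbb{D}$-basis of $M$ equals the real dimension of $M/\epsilon M$, which is basis-independent by the standard invariance-of-dimension theorem for vector spaces over a field. Hence any two $\mathbb{D}$-bases of $M$ have the same cardinality, which we may unambiguously call the rank of $M$.

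The only delicate point is verifying the linear independence step carefully: one must use that $\epsilon M$ is genuinely the set of $\mathbb{D}$-combinations $\sum_i (\epsilon c_i) e_i$, which follows from the fact that $\{e_i\}$ is a generating set together with $\epsilon\cdot\epsilon=0$. Everything else reduces to classical linear algebra over $\mathbb{R}$, so no further obstacle arises.
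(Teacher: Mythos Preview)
Your proof is correct and takes a genuinely different route from the paper's. The paper argues by contradiction at the level of the change-of-basis matrix: assuming bases of cardinalities $n<m$, it writes $b_i'=A_{ij}b_j$ with $A\in\mathbb{D}^{m\times n}$ and performs an elimination procedure on the system $\lambda_r A_{ri}=0$, successively solving for unknowns whose coefficients are invertible until only purely dual coefficients remain; setting the surviving unknowns equal to $\epsilon$ then produces a nontrivial dependence among the $b_i'$. Your argument instead reduces modulo the maximal ideal $(\epsilon)$, identifies $M/\epsilon M$ as a real vector space, and shows that any $\mathbb{D}$-basis descends to an $\mathbb{R}$-basis of the quotient, so that the cardinality of any basis equals $\dim_{\mathbb{R}}(M/\epsilon M)$. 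This is the standard commutative-algebra proof that rank is well defined for free modules over local (or more generally nonzero commutative) rings, and it is both shorter and more conceptual than the paper's hands-on elimination. It also dovetails nicely with the paper's subsequent development, where precisely the quotient $V=M/\epsilon M$ is introduced and plays a central role. One minor remark: your linear-independence step can be shortened by noting that $\sum_i a_i e_i\in\epsilon M$ implies $\epsilon\sum_i a_i e_i=0$, hence $\epsilon a_i=0$ in $\mathbb{D}$ and so $a_i=0$; but the version you wrote is equally valid.
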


\begin{proof}
Let $\{b_i\}$ and $\{b_i'\}$ be two bases for $M$, where the cardinality of the former is $n$ and that of the latter is $m$, and suppose, by contradiction, that $n< m<\infty$. There is a matrix $(A_{ij})\in \mathbb{D}^{m\times n}$ such that $b_i'=A_{ij} b_j$.
%But for every $i$, the $i$-th row must contain some invertible element for otherwise $\epsilon b_i'=0$ contradicting that $\{b_i'\}$  is a basis.
 Consider the equation $\lambda_r A_{ri} b_i=0$ which implies $\lambda_r A_{ri}=0$ for $i=1,\cdots, n$. The unknowns are $m$ in number.
 Suppose that there is some $i$ and some $r$ such that $A_{ri}$ is invertible, then $\lambda_r$ can be expressed in terms of the other unknowns. As long as there is a coefficient which is invertible we can continue the process. This shows that we shall be able to express $k\le n$ of the $m$ lambdas in terms of the other $m-k$ lambdas and that the $m-k>0$ remaining unknowns are going to satisfy $c\le n-k$ equations whose coefficients are purely dual numbers. If we set the remaining  $m-k$ unknowns equal to $\epsilon$ we satisfy all $n$ equations, which proves that $\lambda_r b_r'=\lambda_r A_{ri} b_i=0$ and hence the  $\{b_i'\}$ are not linearly independent. The contradiction proves $n \ge m$. Inverting the roles of the bases we get $m\ge n$ and hence $n=m$.
\end{proof}

\section{Screw theory from $\mathbb{D}$-module geometry} \label{vkd}

%\subsection{The derivation} \label{vkd}

Let $M$ be a finitely generated free $\mathbb{D}$-module.
Clearly, it is isomorphic to the vector space $\mathbb{R}^{2n}$ where $n=\textrm{rank} M$, the isomorphism depending on the chosen basis.

The product by $\epsilon$ can be regarded as an endomorphism $\epsilon: M\to M$. Regarding $M$ as a real vector space we have $\textrm{dim}_{\mathbb{R}} M=\textrm{dim}_{\mathbb{R}} \textrm{ker} \epsilon + \textrm{dim}_{\mathbb{R}} \textrm{Im} \epsilon$. Since $\epsilon^2=0$ we have $\textrm{Im} \epsilon\subset \textrm{ker} \epsilon$. Note that the real dimension of $\textrm{Im} \epsilon$ must be at least $n$, indeed if $\{b_i\}$ is a basis of $M$, then $\{\epsilon b_i\}$ are $\mathbb{R}$-linearly independent by the $\mathbb{D}$-linear independence of $\{b_i\}$. This proves $\textrm{dim}_{\mathbb{R}} \textrm{Im} \epsilon\ge n$, and since  $\textrm{dim}_{\mathbb{R}} M=2n$ we conclude $\textrm{dim}_{\mathbb{R}} \textrm{Im} \epsilon=\textrm{dim}_{\mathbb{R}} \textrm{ker} \epsilon=n$ and hence $\textrm{Im} \epsilon= \textrm{ker} \epsilon$.

There is a natural vector subspace of $M$, namely $\epsilon M=\textrm{Im} \epsilon$, and a natural quotient space $V=M/\epsilon M$, both of real dimension $n$. As usual for linear maps, $\epsilon$ passes to the quotient to  an isomorphism between $M/\textrm{ker} \epsilon$ and $\textrm{Im} \epsilon$, namely between $V$ and $\epsilon M$. This map is denoted with the same letter, i.e.\ $\epsilon: V\to \epsilon M$, as this notation should not cause confusion.

\subsection{Real subspace of a basis and orientation }

Let $\{b_i\}$ and $\{b'_j\}$ be two bases (all our bases are ordered). We can write $b'_j=A_{jk} b_k$ for a suitable matrix of dual numbers. We say that $\{b'_i\}\sim \{b_i\}$ if $\det \mathfrak{Re}(A)>0$. Observe that if $b''_j=A_{jk} b'_k$ is a third basis, then $\mathfrak{Re}(AB)=\mathfrak{Re}(A) \mathfrak{Re}(B)$. From here it can be shown that $\sim$ is an equivalence relation and that there are just two classes.\\

\begin{definition}
We say that the module $M$ has been endowed with an  orientation (denoted OR as in ($\star\star$)) if one class of the equivalence relation has been chosen as `positive' (for $\textrm{rank} M=3$ the bases in such class might be referred as 'right-handed', 'left-handed' being the bases in the other class).
\end{definition} $\empty$

Let $\{b_i\}$ be a basis and let $R$ be the real $n$-dimensional subspace spanned by this basis.  No element of $x\in R$ can belong to $\epsilon M$, for otherwise $0=\epsilon x=(\epsilon x^i) b_i$ which contradicts the linear independence of $\{b_i\}$. We conclude that $R$ is  transverse\footnote{Two subspaces $L_1,L_2$ of a real vector space $M$ are {\em transverse} if $L_1+L_2=M$. If $\dim_{\mathbb{R}} L_1= \dim_{\mathbb{R}} L_2=(\dim_{\mathbb{R}} M)/2$, as in our case, then $L_1\cap L_2=\{0\}$ iff $L_1+L_2=M$ iff $L_1\oplus L_2=M$, the latter property holding if any vector  $m\in M$ can be written in a unique way as a sum: $m=l_1+l_2$, $l_1\in L_1$, $l_2\in L_2$.} to $\epsilon M$, so that $M=R\oplus \epsilon M$.

Let $R$ be a real $n$-dimensional subspace transverse to $\epsilon M$, $M=R\oplus \epsilon M$.
Since $R$ has dimension $n$, the quotient $\pi: M\to V$ is such that $\pi(R)=V$, thus $\pi$ provides an $\mathbb{R}$-linear isomorphism between $R$ and $V$. For every, $v\in V$ we denote with $v^R$ the unique vector in $R$ such that $\pi(v^R)=v$.
Every basis in $M$ projects to a basis of $V$ (because with $\lambda_i\in \mathbb{R}$, $0=\lambda_i \pi(b_i)=\pi(\lambda_i b_i)$ which implies $\lambda_ib_i\in \epsilon M$, hence $\epsilon \lambda_i b_i=0$ and finally $\lambda_i=0$), and every basis in $V$ is the projection of a basis in $M$ (because with $\lambda_i=a_i+\epsilon b_i\in \mathbb{D}$, $0=\lambda_i v_i^R$, implies $a_i v_i=0$, thus $a_i=0$ and hence $b_i v_i^R\in \epsilon M$, which projected gives $b_i v_i=0$ and hence $b_i=0$). Moreover, it is easy to verify that every two bases in the same class in $M$ project to bases in the same class in $V$, and similarly if two projected bases in $V$ are in the same class so are the original bases in $M$. This means that the orientation in $M$ induces an orientation in $V$ and conversely.

\subsection{Definition of scalar product}
%and Gram-Schmidt orthogonalization }

When working on $M$ by scalar product we shall mean the following object.\\

\begin{definition} \label{bas}
We say that a  $\mathbb{D}$-bilinear map $\circ : M\times M\to \mathbb{D}$ is a {\em scalar product} if for every $x$, $\mathfrak{Re}(x\circ x) \ge 0$, with equality iff $x \in \epsilon M$ (in which case $x\circ x=0$, so that the elements of $\epsilon M$ can be called {\em null vectors}).
\end{definition} $\empty$

 In other words $\mathfrak{Re} (\cdot \circ \cdot)$ descends to a scalar product for the real vector space $V$. This scalar product on the quotient is denoted with a dot `$\cdot$'.

The square of $x\in M$, is denoted $x^2:=x\circ x$.
The modulus of  $x\in M\backslash \epsilon M$ is defined as $\vert x \vert :=\sqrt{x^2}$, and as 0 if $x\in \epsilon M$. Notice that the modulus is a dual number and $x^2=\vert x \vert^2$.

Let $\{b_i\}$ be a basis of $M$, the Gram-Schmidt orthogonalization proceeds formally as usual
\begin{align*}
c_1&=b_1,\\
c_2&=b_2-\frac{b_2\circ c_1}{c_1\circ c_1}\,c_1,\\
c_3&=b_3-\frac{b_3\circ c_1}{c_1\circ c_1}\, c_1-\frac{b_3\circ c_2}{c_2\circ c_2}\, c_2
\end{align*}
and so on. Then one defines $m_i=c_i/\vert c_i\vert$ (no sum over $i$).

In the process we cannot get a first $c_i\in \epsilon M$, for multiplying the corresponding equation by $\epsilon$ we would get an equation contradicting the linear independence of $\{b_i\}$. Thus each $c_i$ is such that $c_i\circ c_i$ is invertible, so that the process indeed makes sense.
%The process also preserves the orientation class of the basis.
We conclude that $M$ admits orthonormal bases $\{m_i\}$ (which, given an orientation, can be found positive).
%We shall always use the letter $m$ for these bases.

If $z\in M$ does not belong to $\epsilon M$ then it belongs to a basis of $M$ and, if it is normalized it belongs to an orthonormal basis. Indeed, let $\{b_i\}$ be a basis then $z=z^i b_i$. Not all coefficients $z^i$ can be pure dual otherwise $\epsilon z=0$ which would imply $z\in \epsilon M$, a contradiction. Thus some coefficient is invertible and hence we can express some basis elements in terms of $z$ and the other basis elements. This shows that $z$ belongs to a basis. The  Gram-Schmidt orthogonalization preserves the first basis element, if it is normalized, hence the second claim.

\subsection{Definition of cross product and mixed product}

From now on assume that $M$ is a free $\mathbb{D}$-module of  rank 3, endowed with a scalar product $\circ$, and an orientation $\textrm{OR}$. We denote this triple with $(M, \circ,$ $ \textrm{OR})$. We can define on $M$ a cross product  in the usual way
\[
x\times y=x^i y^j \epsilon_{ijk} m_k ,
\]
where $x=x^i m_i$, $y=y^i m_i$ and $\{m_i\}$ is a positive orthonormal basis. Indeed, the independence from the choice of basis follows from the fact that  the change between positive orthonormal bases has the following form
\begin{equation}
m'_i=U_{ik} m_k, \qquad U_{ik}= O_{ij}\,(\delta_{jk}+\epsilon A_{jk})
\end{equation}
where $O_{ij}$ is real special orthogonal and $A_{jk}$ is real antisymmetric. The determinant of this product matrix is just 1.

The mixed product can be expressed in a basis as usual $x\times y \circ z=x^i y^i z^k \epsilon_{ijk}$ which shows that it is antisymmetric and hence invariant under cyclic permutations. The standard identities for the Levi-Civita symbol imply the usual formulas for the double cross product
\[
x \times (y\times z)=(x \circ z)\, y - (z \circ y)\, z.
\]
We have also the formulas
\begin{align}
(x \times y) \circ (u \times  v) &= (x \circ  u)\,(y \circ v) - (x \circ v)\,(y \circ u), \label{ciq} \\
(x \times y) \times (u \times  v) &=  [x,u,v]\, y- [y,u,v]\, x=[x,y,v]\,u-[x,y,u] \,v,
\end{align}
with $[x,y,z]:=x\circ y \times  z$,
and the Jacobi identity
\begin{equation} \label{hhf}
x \times (y\times z)+z \times (x\times y)+y \times (z\times x)=0.
\end{equation}

\subsection{Definition of screw scalar product and $\langle,\rangle$-null subspaces}
Let us denote $\langle x, y \rangle:=\mathfrak{Du}(x \circ y)$, and let us call it {\em screw scalar product}. A $\langle,\rangle$-{\em null subspace} $S\subset M$, is a real vector subspace such that for any $x,y\in S$, $\langle x, y\rangle=0$.

The real span $R$ of a basis $\{m_i\}$ of $M$ is  $\langle,\rangle$-null. Indeed,  for any two elements $x,y\in R$
\[
x\circ y= x^i y^j m_i \circ m_j= x^i y^j \delta_{ij}\in \mathbb{R} \ \Rightarrow \  \langle x, y \rangle=0
\]
as $x^i,y^i\in\mathbb{R}$. Clearly, $\epsilon M$ is $\langle,\rangle$-null as it is $\circ$-null.

%No element of $x\in R$ can belong to $\epsilon M$, for otherwise $0=\epsilon x=(\epsilon x^i) e_i$ which contradicts the linear independence of $\{e_i\}$. We conclude that $R$ is also transverse to $\epsilon M$, so that $M=R\oplus \epsilon M$. Since $R$ has dimension $n$, the quotient $\pi: M\to Q$ is such that $\pi(R)=Q$, thus $\pi$ provides a bijection between $R$ and $Q$.
We also know that $R$ is transverse to $\epsilon M$ and $M=R\oplus \epsilon M$.
Any $\circ$-orthonormal basis $\{e_i\}$ in $R$ projects to a $\cdot\,$-orthonormal basis on $(V,\cdot)$. Conversely, given a $\langle,\rangle$-null subspace of real dimension $n$ transverse to $\epsilon M$, every orthonormal basis in $V$ lifts to a unique $\circ$-orthonormal basis on $R$.\\

\begin{proposition}
The symmetric bilinear form $\langle,\rangle$ on $M$, where $M$ is regarded as a 6-dimensional real vector space, has signature $(+,+,+,-,-,-)$, and the $\langle,\rangle$-null subspaces have at most real dimension~3.
\end{proposition}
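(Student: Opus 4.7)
The plan is to compute the Gram matrix of $\langle,\rangle$ in a carefully chosen real basis of $M$, read off the signature, and then invoke the standard fact that the Witt index of a non-degenerate symmetric real bilinear form of signature $(p,p)$ equals $p$.

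First I would exploit the existence of a $\circ$-orthonormal basis $\{m_1,m_2,m_3\}$ of $M$ (guaranteed by the Gram--Schmidt argument carried out earlier in this section). Then $\{m_1,m_2,m_3,\epsilon m_1,\epsilon m_2,\epsilon m_3\}$ is a real basis of $M$, since the $\mathbb{D}$-independence of $\{m_i\}$ together with the identity $M=R\oplus \epsilon M$ (for $R$ the real span of $\{m_i\}$) forces $\mathbb{R}$-independence of the six vectors. Next I would compute
\[
\langle m_i, m_j\rangle = \mathfrak{Du}(\delta_{ij})=0, \qquad
\langle m_i, \epsilon m_j\rangle = \mathfrak{Du}(\epsilon\delta_{ij})=\delta_{ij}, \qquad
\langle \epsilon m_i, \epsilon m_j\rangle = \mathfrak{Du}(\epsilon^2\delta_{ij})=0,
\]
so the Gram matrix of $\langle,\rangle$ in this basis is the hyperbolic block form
$\bigl(\begin{smallmatrix} 0 & I_3 \\ I_3 & 0 \end{smallmatrix}\bigr)$,
which is in particular non-degenerate.

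To extract the signature I would perform the hyperbolic rotation $u_i:=(m_i+\epsilon m_i)/\sqrt{2}$, $v_i:=(m_i-\epsilon m_i)/\sqrt{2}$, and check directly that $\langle u_i,u_j\rangle=\delta_{ij}$, $\langle v_i,v_j\rangle=-\delta_{ij}$, $\langle u_i,v_j\rangle=0$. This exhibits an $\mathbb{R}$-linear basis of $M$ in which $\langle,\rangle$ is diagonal with three $+1$'s and three $-1$'s, proving the signature is $(+,+,+,-,-,-)$.

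For the bound on the dimension of $\langle,\rangle$-null subspaces, I would appeal to the Witt decomposition / standard signature argument: if $S\subset M$ is $\langle,\rangle$-null of real dimension $k$, then letting $W^+$ be the positive $3$-dimensional subspace spanned by $\{u_i\}$, the restriction of $\langle,\rangle$ to $W^+$ is positive definite, so the projection $S\to W^+$ along $W^-:=\mathrm{span}(v_i)$ must be injective (any element of $S\cap W^-$ lies in a null subspace of a negative definite form and hence vanishes, and symmetrically for $W^+$); therefore $k\le 3$. The main (very mild) obstacle is just being careful that the Gram--Schmidt result quoted earlier does produce a basis of $M$ as a $\mathbb{D}$-module whose elements are not in $\epsilon M$, so that the above computation is legitimate; this was already established above. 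The bound is sharp since the real span $R$ of any $\circ$-orthonormal basis $\{m_i\}$ is itself $\langle,\rangle$-null of real dimension $3$, as noted just before the statement.
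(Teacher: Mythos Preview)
Your proof is correct. The signature computation is essentially the paper's: both pass to the real basis $\{m_i,\epsilon m_i\}$ coming from a $\circ$-orthonormal basis and then make the same hyperbolic change of variables (the paper writes the quadratic form directly as $\tfrac{1}{2}\sum_i[(a^i+b^i)^2-(a^i-b^i)^2]$, which is exactly your $u_i,v_i$ diagonalization). The genuine difference is in the bound on $\langle,\rangle$-null subspaces: the paper appeals to the Cauchy interlace theorem, while you give the elementary Witt-index argument that the projection $S\to W^+$ along $W^-$ is injective because $S\cap W^-$ consists of null vectors in a negative-definite space. Your route is more self-contained and arguably better matched to the level of the surrounding text; the paper's invocation of Cauchy interlacing is correct but heavier machinery than the situation requires.
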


\begin{proof}
Let $\{m_i\}$ be a positive orthonormal basis, then $\{m_i, \epsilon m_i\}$ provide a basis for $M$ regarded as a 6-dimensional real vector space. Since
every element $x\in M$ reads $x=(a^i+\epsilon b^i)m_i$, we have $\langle x,x \rangle= 2a^ib^j\delta_{ij}=\frac{1}{2} \sum_{i=1}^3 [(a^i+b^i)^2- (a^i-b^i)^2$, which proves that the signature is $(+,+,+,-,-,-)$. The  $\langle,\rangle$-null subspaces are those real subspaces over which the bilinear form induced from  $\langle,\rangle$ vanishes. By the Cauchy interlace theorem \cite{parlett98} they have at most dimension 3.
%(Alternatively, assume $S$ is $\langle,\rangle$-null and has real dimension $s\ge 4$ then $\textrm{dim} S=\textrm{dim} \pi(S) +\textrm{dim} S\cap \epsilon M$, thus $\frac{1}{\epsilon}(S\cap \epsilon M) \cap  \pi(S)\neq \emptyset$. Let $v$ be the element in the intersection, then $v+\epsilon v $
\end{proof}

\subsection{Construction of the (affine) Euclidean space}

%Let $M$ have rank 3 and be given a scalar product and an orientation.

Let $E$ be the set of  $\langle,\rangle$-null subspaces of real dimension $3$ transverse to $\epsilon M$.
We want to show that there is a map (subtraction) $-:E\times E \to V$, $(A,B)\mapsto B-A$, which converts $E$ into an affine space associated to the real vector space $V$. That is, recalling Weyl's axioms of an affine space \cite[Sec.\ I.2]{nomizu94}, our goal is to define the subtraction and prove the properties:
\begin{itemize}
\item[(a)] for every $A\in E$ and every $x\in V$ there is one  and only one $B\in E$ such that $B-A=x$,
\item[(b)] for every $A,B,C\in E$, we have $(B-A)+(C-B)=C-A$.
\end{itemize}
Since $V$ has both a scalar product and an orientation (induced from that of $M$) we would have that $E$ is a Euclidean 3-dimensional space.

First, we need to define the map $-:E\times E \to V$.
Let $A, B\in E$ and let $\{e_i\}$ be a positive orthonormal basis of $V$. Let $\{e^A_i\}$ and $\{e^B_i\}$  be the lifts to $A$ and $B$, respectively. Since the former is a basis for $M$ we have $e^B_i= (C_{ij}+\epsilon D_{ij})e^A_j$ where $C$ and $D$ are real matrices. But since for each $i$, the vectors $e^B_i$ and $e^A_i$ project on the same element of $V$ they differ just by an element of $\epsilon M$, thus $C$ is the identity. The $\circ$-orthonormality of $\{e^A_i\}$  and $\{e^B_i\}$ implies that $D$ is antisymmetric. We can write $D_{ij}=\epsilon_{ijk} d^k$. The vector $d:=d^k e_k\in V$ does not depend on the initial choice of basis of $V$, since any two such bases are related by a real special orthogonal matrix. We set $B-A:=d$.

Let us prove (b): if $C\in E$ is a third subspace  and $\{e^C_i\}$ the corresponding basis,
\begin{align*}
e^C_i&= (\delta_{ij}+\epsilon F_{ij})\, e^B_j=(\delta_{ij}+\epsilon F_{ij})(\delta_{jk}+\epsilon D_{jk})\, e^A_k\\
&=[\delta_{ik}+\epsilon (D_{ik}+F_{ik})]\, e^A_k=[\delta_{ik}+\epsilon \,
\epsilon_{ikt} (d^t+f^t)]\, e^A_k,
\end{align*}
 which implies $(B-A)+(C-B)=C-A$.

 Let us prove (a): just expand $x$ with respect to the positive orthonormal basis of $V$, $x=x^k e_k$, and define $B$ to be the real subspace spanned by the basis $e^B_i= (\delta_{ij}+\epsilon \epsilon_{ijk} x^k)e^A_j$, then $B-A=x$ (observe that $B$ is $\langle,\rangle$-null as $\mathfrak{Du}(e^B_i\circ e^B_j)=\mathfrak{Du}(\delta_{ij})=0$). We just proved that $B$ exists. In order to show that it is unique suppose that $B'-A=x$, then $e^{B'}_i= (\delta_{ij}+\epsilon \epsilon_{ijk} x^k)e^A_j=e^{B}_i$. As the spans of these bases coincide, $B'=B$, which shows that $B$ is unique.
 %The Weyl axioms \cite[Se.\ I.2]{nomizu94} of an affine space are satisfied and hence $E$ is a Euclidean space.

\subsection{Displacement from bases}

The previous paragraph gives a pratical method to calculate the displacement between points using the module $M$. Each point of $E$ can be represented by an orthonormal positive basis of $M$. So let $\{m_i\}$ and $\{m_i'\}$ be two orthonormal positive bases, let $O,O'\in E$ be the corresponding points, and rotate the latter basis by a real special orthogonal matrix if necessary so that they project to the same basis $\{e_i\}$ of $V$, then we have the formula
\begin{equation} \label{vpd}
O'-O=\frac{1}{2} \epsilon^{-1} m_i \times m_i'.
\end{equation}
Indeed, we have shown above that $m_i'=(\delta_{ij}+\epsilon \epsilon_{ijk}(O'-O)^k)m_j$  thus
\[
 m_i \times m_i'= (\delta_{ij}+\epsilon \,\epsilon_{ijk}(O'-O)^k) \, \epsilon_{ijr} m_r= \epsilon 2 (O'-O)^k  m_k.
\]
  Notice that $1/\epsilon:=\epsilon^{-1} : \epsilon M \to V$ is well defined as the inverse of $\epsilon: V\to \epsilon M$, and the product $m_i \times m_i'$ is indeed in $\epsilon M$ as its resultant vanishes. The right-hand side will be further elaborated below, see Eq.\ (\ref{ipd}).

\subsection{Induced screw vector field (and end of proof of Theorem \ref{bje})} \label{jxe}

We want to show that  every $z\in M$ induces, in a canonical way, a screw vector field $\mathcal{s}: E\to V$ and finally construct a ($\mathbb{D}$-linear) bijection $\beta:M \to \mathcal{S}$ which preserves the cross product, the scalar product and the orientation.

Let us denote the quotient projection of $z$ (which ultimately will be the resultant of the screw field associated to $z$) with $s=\pi(z)\in V$ (see Figure \ref{figureone}).

First observe that for every $A\in E$, we have a splitting $M=A\oplus\epsilon M$ and $\epsilon$ establishes an isomorphism between $V$ and $\epsilon M$, thus we can write
\begin{equation} \label{bka}
z=a+ \epsilon \,\mathcal{s}(A)=s^A+\epsilon \, \mathcal{s}(A)
\end{equation}
 with $a\in A$ and $\mathcal{s}(A)\in V$.

 \begin{figure}[ht]
\centering
\includegraphics[width=6cm]{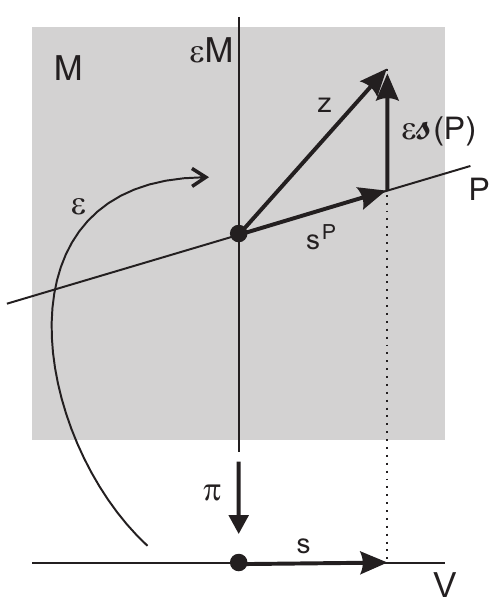}
\caption{The gray region is the module $M$, the black dot at its center is its origin. At the bottom we depicted  $V=M/\epsilon M$, the quotient 3-dimensional vector space. The map $\epsilon\colon M\to M$, $m \mapsto \epsilon m$, has domain $M$ but it passes to the quotient as a map $\epsilon\colon V\to M$,  where it is given the same letter (cf.\ first paragraphs of Sec.\ \ref{vkd}). It is an isomorphism between $V$ and its image $\epsilon M$. Here $P$ is  a  $\langle,\rangle$-null subspace of real dimension $3$ transverse to $\epsilon M$, hence $M=P\oplus \epsilon M$ (the set of all such  $P$ is, by definition, the set $E$ which we prove is actually an affine space associated to $(V,\cdot, or)$ hence a 3-dimensional Euclidean space). Each $P$ induces a splitting that selects a different `vertical' part which determines $\mathcal{s}(P)$ via the isomorphism $\epsilon M \leftrightarrow V$, see Eq.\ (\ref{bka}). Finally, $P\mapsto \mathcal{s}(P)$ is the screw field induced on Euclidean space by $z\in M$. Its  resultant $s$ is the quotient projection of $z$ which does not depend on $P$.}
\label{figureone}
\end{figure}

Suppose that we choose a different subspace $B\in E$, then $z=b+\epsilon \mathcal{s}(B)$. Let $\{e_i\}$ be a positive orthonormal basis in $V$ and $\{e_i^A\}$ and $\{e^B_i\}$ bases of $A$ and $B$ that project on $\{e_i\}$ (necessarily positive and $\circ$-orthonormal). We have $z=a^i e_i^A+\epsilon  \mathcal{s}(A)^i  e_i^A$ and $z=b^i e_i^B+\epsilon  \mathcal{s}(B)^i  e_i^B$. Since they have the same projection $s\in V$, $s=s^i e_i$ we have $b^i=a^i=s^i$. Recall that
$e_i^B= e_i^A+\epsilon \epsilon_{ijk} d^k e_j^A$, thus
\[
z=s^i e_i^A+\epsilon[ s^i\epsilon_{ijk} d^k +\mathcal{s}(B)^j  ]\,e_j^A
\]
which implies $ s^i\epsilon_{ijk} d^k +\mathcal{s}(B)^j = \mathcal{s}(A)^i$, or equivalently
\begin{equation} \label{bjm}
\mathcal{s}(B)-\mathcal{s}(A)= s \times (B-A).
\end{equation}
Conversely, given a screw field $\mathcal{s}: E\to V$ we can associate to it a unique element $z\in M$ via Eq.\ (\ref{bka}). This map is really independent of the chosen element $A\in E$ as the last calculations followed in the backward direction show.

The just constructed bijection $\beta:M \to \mathcal{S}$, $z\mapsto \mathcal{s}$, is actually $\mathbb{R}$-linear as Eq.\ (\ref{bka}) shows.
In order to prove that it is $\mathbb{D}$-linear we need only to show that $\beta(\epsilon z)=\mathcal{E}(\beta(z))$. Now, since $\epsilon z=\epsilon s\in \epsilon M$, we have for every $A\in E$, $\beta(\epsilon z)(A)=s$, which means indeed $\beta(\epsilon z)=\mathcal{E}(\beta(z))$.

Let $z_1,z_2\in M$, then
\begin{align}
\begin{split} \label{bqd}
z_1\circ z_2&=\left(s_1^i e_i^A+\epsilon  \mathcal{s}_1(A)^i e_i^A\right) \circ \left(s_2^j e_j^A+\epsilon  \mathcal{s}_2(A)^j e_j^A \right)\\
&=s_1^i s_2^j
 \delta_{ij}
+\epsilon \left[s_1^i  \mathcal{s}_2(A)_i +\mathcal{s}_1(A)_i s_2^i\right] \\
&=s_1\cdot s_2+\epsilon \left[s_1\cdot \mathcal{s}_2(A)+\mathcal{s}_1(A)\cdot s_2\right].
\end{split}
 \end{align}
 This equation implies that over a $\langle,\rangle$-null subspace the $\circ$-scalar product coincides with the $\cdot $\,-scalar product of the projections (resultants).
Using $e_i^A\times e_j^A=\epsilon_{ijk} e_k^A$
\begin{align}
\begin{split} \label{dir}
z_1\times z_2&=\left(s_1^i e_i^A+\epsilon  \mathcal{s}_1(A)^i e_i^A\right) \times \left(s_2^j e_j^A+\epsilon  \mathcal{s}_2(A)^j e_j^A \right)\\
&=s_1^i s_2^j
 \epsilon_{ijk} e_k^A
+\epsilon \left[s_1^i  \mathcal{s}_2(A)^j +\mathcal{s}_1(A)^i s_2^j\right] \epsilon_{ijk} e_k^A \\
&=(s_1\times s_2)^A+\epsilon \left[s_1\times \mathcal{s}_2(A)+\mathcal{s}_1(A)\times s_2\right]
\end{split}
 \end{align}
which shows that the resultant of $z_1\times z_2$ is $s_1\times s_2$. This result also shows that $\beta(z_1\times z_2)=-[\beta(z_1),\beta(z_2)]_L$ where $[,]_L$ is the Lie bracket of two fields, in other words, $\beta$ is a Lie algebra isomorphism between $(M,-\times)$ and $(\mathcal{S},[,]_L)$.
Moreover,
\begin{align*}
z_1\times z_2 \circ z_3=s_1\times s_2 \cdot s_3+&\epsilon [\mathcal{s}_1(A)\cdot s_2\times s_3 \\
& \quad +\mathcal{s}_3(A)\cdot s_1\times s_2+\mathcal{s}_2(A)\cdot s_3\times s_1].
 \end{align*}
The fact that the combination $s_1\cdot \mathcal{s}_2(A)+\mathcal{s}_1(A)\cdot s_2$ does not depend on $A$ follows immediately from the fact that it coincides with $\mathfrak{Du}(z_1\circ z_2)$ which does not depend on it. Moreover, the fact that $s_1\times \mathcal{s}_2(A)+\mathcal{s}_1(A)\times s_2$ transforms as a screw field under change of $A$ follows from the fact that $z_1\times z_2\in M$. Observe that if $z_1$ and $z_2$ have proportional resultants and $z_1\circ z_1=z_2\circ z_2=1$ then $s_1\cdot \mathcal{s}_2(A)\propto s_2\cdot \mathcal{s}_2(A)=0$ and similarly with 1 and 2 exchanged, thus $\mathfrak{Du}(z_1\circ z_2)=0$ (this fact will be important for the introduction of the dual angle).
If $z\in M$ is normalized, $z\circ z=1$, we might denote its screw vector field with the letter $\ell$ (because its geometric data on $E$ corresponds to an oriented line, see below).

\subsection{Equivalence between bases in $M$ and orthonormal \\ frames in $E$} \label{bjg}

Let $\{m_i\}$ be an orthonormal basis in $M$, and let $(O,\{e_i\})$ be the pair given by (a) the 3-dimensional real subspace spanned by $m_1,m_2,m_3$ (which is actually $\langle,\rangle$-null and transverse to $\epsilon M$), and (b) the projected basis on $V$. This obtained data can be used to recover $\{m_i\}$ and hence is equivalent to it, indeed it is sufficient to lift the basis $\{e_i\}$ as $m_i=e_i^O$. Pictorially, we can thus represent $\{m_i\}$ as a usual frame in the Euclidean space $E$.

We mentioned that formula (\ref{vpd})  giving the displacement between the origins of two frames $(O,e_i)$, $(O', e_i')$, can be written
\begin{equation} \label{ipd}
O'-O=\frac{1}{2} \left[e_i\times \ell_i'(A)+\ell_i(A)\times e_i'\right]=\frac{1}{2} \left[e_i\times \ell_i'(O)\right]
\end{equation}
where $A$ is arbitrary and so can be chosen equivalent to $O$. The last expression might be interpreted as the displacement expression in terms of motor reduction, see next section.

\subsection{Motor reduction and $\mathbb{D}$-linear isomorphism}

Now observe that for every $A\in E$, the map $\phi_A: M\to V\otimes \epsilon V$, $z\to \pi(z)+\epsilon \mathcal{s}(A)$ is a $\mathbb{D}$-linear isomorphism which preserves the scalar product and cross product. This is the motor reduction at $A$. In most applications the motor reduction will not be needed. Once we know that orthonormal frames  in $E$ are given by orthonormal bases in $M$, the only thing we need to know is that they are related by dual matrices, and these do not depend on any motor reduction.
% (thus if motor reduction is applied, these matrix changes do not depend on the reduction point).

%\subsubsection*{Distance from screws}

\subsection{Definition of screw axis and characterization in terms of screw field}

We have shown above that every normalized $z\in M$ belongs to some positive orthonormal basis  and hence is contained in some $\langle,\rangle$-null subspace of real dimension $3$ transverse to $\epsilon M$, namely in some point $O\in E$. Let $\{e_i\}$ be a basis of  $V$ and let $z=z^i e_i^O$, $z^i\in \mathbb{R}$. But $e_i^O=e_i^P+\epsilon A_{ij} e^P_j$ with $e_i^P$ basis with the same projection  and $A_{ij}$  is a real antisymmetric matrix. Thus $z=z^i e_i^P+ \epsilon z^iA_{ij} e^P_j$. This shows that $z$ belongs to $P$ iff $z^iA_{ij}=0$ which is true iff $A_{ij}=\lambda \epsilon_{ijk} z^k$ for some $\lambda\in \mathbb{R}$, which means $P-O=\lambda z^i e_i$. Thus  $z$ belongs precisely to the points of a whole  line $\{O+\lambda z^i e_i, \lambda \in \mathbb{R}\}$. We conclude that the normalized elements of $M$ represent oriented lines on $E$. Every element of $z\in M\backslash \epsilon M$ can be normalized so it can be written in the form $z=(a+b \epsilon) u=a u +\epsilon bu$ with $a>0$ and $u\circ u=1$.
Indeed, $a+b\epsilon =\sqrt{z \circ z}$ and $u=(a+\epsilon b)^{-1} z$.
By using  the pitch $p=b/a$, it can be written in the form $z= a e^{\epsilon p} u$.
The {\em screw axis of $z$} is, by definition, the line of $u$. Observe that if we take a point $A\in E$, such that $u\in A$  (remember that $u$ belongs to a $\circ$-orthonormal basis and hence to the real 3-dimensional subspace $A$ spanned by it) then the screw can be written $z=s^A+\epsilon  \mathcal{s}(A)$, as $z=au+\epsilon b u=au+\epsilon b \pi(u)$ and $au$ belongs to $A$ (since $a$ is real) we have $s^A=au$ (thus $s=a\pi(u)$) and $\mathcal{s}(A)=b\pi(u)$. This shows that for $A$ in the screw axis the screw field $\mathcal{s}(A)$ is aligned with the resultant and independent of the point in the screw axis. Moreover, Eq.\ (\ref{bjm}) shows that there are no other points where the screw field is aligned with the resultant, thus in terms of the screw field the axis is characterized as the locus where the screw field is proportional to the resultant.

Pictorially, we can represent  an element of $M\backslash \epsilon M$ as an oriented line on $E$ aligned with the resultant $a\pi(u)$ and a second vector $\mathcal{s}_\parallel:=b\pi(u)=\mathcal{s}(A)$ called {\em vector invariant}. If the pitch  vanishes (equivalently $z\circ z \in \mathbb{R}$) we also speak of {\em applied, line or sliding vector} \cite{dimentberg68,huang13}.

\subsection{Properties of sliding vectors}

\begin{proposition}
The points of $E$ are precisely the real subspaces of $M$ consisting of just sliding vectors (save for the zero vector) of maximal real dimension (necessarily 3).
\end{proposition}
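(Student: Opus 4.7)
The plan is to verify the two inclusions separately, exploiting the characterization of $\langle\,,\rangle$-null subspaces and the fact that sliding vectors are exactly those $z\in M\setminus\epsilon M$ with $z\circ z\in\mathbb{R}$.

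For the forward direction, let $A\in E$. Then $A$ is $\langle\,,\rangle$-null, transverse to $\epsilon M$, and of real dimension 3. Lift any positive orthonormal basis $\{e_i\}$ of $V$ to $A$, obtaining a $\circ$-orthonormal basis $\{e_i^A\}$ of $A$ (this was done when constructing $E$). Any $z\in A$ then has real coordinates $z^i$, so $z\circ z=\sum_i (z^i)^2\in\mathbb{R}_{\ge 0}$. Since $A\cap\epsilon M=\{0\}$, each nonzero $z\in A$ lies in $M\setminus\epsilon M$ and has $z\circ z\in\mathbb{R}_{>0}$, which means precisely that its pitch vanishes; hence $z$ is a sliding vector. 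So every $A\in E$ is a 3-dimensional real subspace whose nonzero elements are all sliding vectors.

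For the converse, let $S\subset M$ be a real subspace whose nonzero elements are sliding vectors. For each $z\in S$ we have $z\circ z\in\mathbb{R}$, hence $\langle z,z\rangle=\mathfrak{Du}(z\circ z)=0$. Applying this to $z_1+z_2\in S$ and using that $\langle\,,\rangle$ is $\mathbb{R}$-bilinear and symmetric (as the dual part of the symmetric $\mathbb{D}$-bilinear form $\circ$), polarization gives
\begin{equation*}
2\langle z_1,z_2\rangle=\langle z_1+z_2,z_1+z_2\rangle-\langle z_1,z_1\rangle-\langle z_2,z_2\rangle=0
\end{equation*}
for all $z_1,z_2\in S$. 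Therefore $S$ is $\langle\,,\rangle$-null, and by the proposition on the signature of $\langle\,,\rangle$, $\dim_{\mathbb{R}} S\le 3$. Moreover every sliding vector lies in $M\setminus\epsilon M$ by definition, so $S\cap\epsilon M=\{0\}$.

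To finish, assume $S$ has maximal real dimension. The forward direction already exhibits subspaces of dimension 3 with the stated property (any $A\in E$), so the maximum is 3 and indeed achieved. When $\dim_{\mathbb{R}} S=3$, combining $S\cap\epsilon M=\{0\}$ with $\dim_{\mathbb{R}} S+\dim_{\mathbb{R}}\epsilon M=6=\dim_{\mathbb{R}} M$ yields $M=S\oplus\epsilon M$, i.e.\ $S$ is transverse to $\epsilon M$. Together with $\langle\,,\rangle$-nullity this is exactly the defining property of an element of $E$, so $S\in E$. The two directions together give the claimed equivalence. The only real content is the polarization step; everything else is a direct reading of the definitions of $E$ and of sliding vectors.
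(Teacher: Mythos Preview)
Your proof is correct and follows essentially the same approach as the paper: both directions hinge on the polarization identity to pass from $\langle z,z\rangle=0$ to $\langle\,,\rangle$-nullity, and on $S\cap\epsilon M=\{0\}$. The only minor difference is that the paper bounds $\dim_{\mathbb{R}}S\le 3$ directly from $S\cap\epsilon M=\{0\}$ (since $\dim_{\mathbb{R}}\epsilon M=3$) rather than invoking the signature proposition, and you are a bit more explicit than the paper in verifying that the maximal case actually lands in $E$.
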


\begin{proof}
Note that every point $A$ of $E$, being a $\langle,\rangle$-null subspace, is a real subspace of real dimension 3 consisting of just sliding vectors (save for the zero vector). Indeed, $A\cap \epsilon M=0$ which means that any non-vanishing element $z\in A$ has resultant, and zero pitch as $\mathfrak{Du}(z\circ z)=\langle z,z\rangle=0$.

Conversely, let  $S$ be a real subspace  consisting of just sliding vectors (save for the zero vector). By the polarization formula it is $\langle,\rangle$-null. It cannot have elements in common with $\epsilon M$ but the zero vectors, for such elements would have vanishing resultant contradicting the definition of sliding vector. Thus $S\cap \epsilon M =\{0\}$  and hence $S$ can have at most real dimension 3.
\end{proof}

Consider $N$ sliding vectors and assume that their screw axes intersect at a point $O$ (they are concurrent). This means that $z_i\in O$ for each $i$ and since they have zero pitch $\mathcal{s}_i(O)=0$, hence $z_i=v_i^O$,  for some $v_i\in V$. This implies $\sum_i z_i=(\sum_i v_i)^O$ namely the sum of sliding vectors whose axes intersect is again a sliding vector whose resultant is the sum of the resultants.

It can also be observed that the cross product of two sliding vectors whose axes intersect is again a sliding vector whose axis passes through the intersection point. Finally, the scalar product of two sliding vectors whose axes intersect coincides with the scalar product of the resultants.
 Many  calculations that we shall meet in geometrical plane problems will be simplified by these observations since in a plane any two non-parallel lines intersect.

\section{Further results obtained via $\mathbb{D}$-module geometry} \label{stre}

In this section we continue obtaining results on the correspondence between concepts from  $\mathbb{D}$-module geometry, Euclidean geometry and screw theory.

\subsection{Lie algebra considerations} \label{mff}

Let $G$ be the Lie group of isometries of $(M,\circ,OR)$, that is, linear maps $U: M\to M$ such that for every $v,w\in M$, $U(v)\circ U(w)=v\circ w$.
We shall be interested in the connected component of the identity, so that the determinant of any of its elements is 1. We denote $SO(3,\mathbb{D})$ such connected component.

Let $U(t)$ be a 1-parameter family passing through the identity $U(0)=I$. By differentiating, evaluating at $t=0$, and denoting  $B=U'$ we get
\begin{equation} \label{aub}
B(v) \circ w +v \circ B(w)=0,
\end{equation}
namely the Lie algebra,  $so(3,\mathbb{D})$ consists of $\circ$-antisymmetric matrices.\\

\begin{proposition}
The Lie algebra $so(3,\mathbb{D})$ is the family of $\circ$-antisymmetric operators. Every such operator $B: M\to M$ can be uniquely written $B=b\times$  with $b\in M$, and every operator of the form $b\times$ is $\circ$-antisymmetric .
\end{proposition}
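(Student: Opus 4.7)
My plan is to treat the three assertions (Lie algebra equals $\circ$-antisymmetric operators; every $\circ$-antisymmetric $B$ has the form $b\times$ uniquely; every $b\times$ is $\circ$-antisymmetric) in reverse order, since the last two reduce $so(3,\mathbb{D})$ to an explicit $\mathbb{D}$-linear description that then makes the Lie-algebra identification immediate by dimension.

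First, the statement that $b\times$ is $\circ$-antisymmetric is pure algebra from the cyclic invariance of the mixed product $[x,y,z]:=x\circ y\times z$ proved in Section \ref{vkd}. One has $(b\times v)\circ w=[b,v,w]$ and $v\circ(b\times w)=[v,b,w]=-[b,v,w]$, so the two cancel and \eqref{aub} holds with $B=b\times$.

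Second, for the existence and uniqueness of $b$ given a $\circ$-antisymmetric $B$, I would fix a positive $\circ$-orthonormal basis $\{m_i\}$ and encode $B$ by the dual matrix $B_{ij}:=m_i\circ B(m_j)\in\mathbb{D}$. The hypothesis \eqref{aub} applied to $v=m_j$, $w=m_i$ gives exactly $B_{ij}=-B_{ji}$, so the matrix is antisymmetric over $\mathbb{D}$. The universal identity $\epsilon_{ijk}\epsilon_{ljk}=2\delta_{il}$ (which holds in any commutative ring and in particular in $\mathbb{D}$) then shows that $B_{ij}=\epsilon_{kij}b^k$ with the unique coefficients $b^k:=\tfrac{1}{2}\epsilon_{kij}B_{ij}\in\mathbb{D}$. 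Setting $b:=b^k m_k$ and comparing $(b\times v)=b^i v^j\epsilon_{ijk}m_k$ with $B(v)$ on basis elements, I conclude $B(v)=b\times v$ for every $v\in M$. Uniqueness follows by testing $b\times m_j=b'\times m_j$ for each $j$: the component equations $(b-b')^i\epsilon_{ijk}=0$ for all $j,k$ force $b-b'=0$ without any appeal to invertibility of ring elements.

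Third, for the identification $so(3,\mathbb{D})=\{B: B \text{ is }\circ\text{-antisymmetric}\}$, the inclusion ``$\subset$'' is exactly the derivation carried out just above the statement, leading to \eqref{aub}. For ``$\supset$'', given an antisymmetric $B$ I would form the exponential $U(t):=\sum_{k\ge0}t^k B^k/k!$, which is well-defined as an entire power series in $t$ because $M$ is a finite-rank free $\mathbb{D}$-module and hence $\mathrm{End}_{\mathbb{D}}(M)$ is finite-dimensional over $\mathbb{R}$. Differentiating $U(t)v\circ U(t)w$ and using \eqref{aub} on each term yields zero, so $U(t)v\circ U(t)w=v\circ w$ for all $t$, showing $U(t)\in SO(3,\mathbb{D})$ and $B=U'(0)\in so(3,\mathbb{D})$. (Alternatively one can avoid the exponential and argue by dimension: the previous step identifies the space of antisymmetric operators with $M$, hence with a real $6$-dimensional space, which matches the real dimension of the group $SO(3,\mathbb{D})$ since $U=O(I+\epsilon A)$ with $O\in SO(3,\mathbb{R})$ and $A$ real antisymmetric contributes $3+3$ parameters.)

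The only delicate point, which I highlight because it is where the module setting differs from the vector-space one, is the uniqueness of $b$: one must resist the reflex of deducing ``$b\times v=0$ for all $v$ implies $b=0$'' from a single non-collinear $v$, since linear dependence over $\mathbb{D}$ does not yield the usual conclusion. Testing against the full orthonormal basis and reading off components via $\epsilon_{ijk}$ bypasses this issue entirely; after that, everything else is a transcription of the standard three-dimensional arguments.
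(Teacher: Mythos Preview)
Your proof is correct and follows essentially the same approach as the paper. The only stylistic difference is in how $b$ is extracted from a given $\circ$-antisymmetric $B$: the paper writes the coordinate-free formula $b=\tfrac{1}{2}\,m_i\times B(m_i)$ and verifies $b\times x=B(x)$ directly via the double cross product identity together with $\circ$-antisymmetry, whereas you pass through the component matrix $B_{ij}=m_i\circ B(m_j)$ and the Levi--Civita identity $\epsilon_{ijk}\epsilon_{ljk}=2\delta_{il}$; these are equivalent computations, and your explicit uniqueness argument via basis testing makes transparent the step the paper leaves as ``$(b'-b)\times z=0$ for all $z$ implies $b'=b$''.
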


\begin{proof}
For every $b\in M$, $B:=b\times$ satisfies the equation (\ref{aub}) (due to the properties of the mixed product) and generates a 1-parameter subgroup that belongs to $SO(3,\mathbb{D})$.
%(see for instance  Eq.\ (\ref{jjv}) that provides such 1-parameter subgroup).

Conversely, if $B$ is $\circ$-antisymmetric, chosen a basis $\{m_i\}$, and defining
\[
b=\frac{1}{2} m_i \times B(m_i)
\]
we get
\begin{align*}
{ b}\times x&=\frac{1}{2} [m_i \times B(m_i)]\times x=\frac{1}{2}\left\{(x \circ m_i)  B(m_i)-(x \circ m_i))\, m_i  \right\}\\
&= \frac{1}{2}\{ B( (x \circ m_i)\, m_i)+(B(x) \circ m_i) \,m_i  \}=\frac{1}{2} [B(x)+B(x) ]=B(x)
\end{align*}
where we used twice the $\circ$-antisymmetry of $B$, and, twice, the fact that every element $z\in M$ can be written $z=(z\circ m_i) m_i$. If $b'\times =b \times$, then for every $z\in M$, $(b'-b)\times z=0$, which implies $b'=b$.
\end{proof}

%Now, we recall that the infinitesimal isometries of Euclidean space (i.e.\ the elements of $se(3)$) can be identified with the screw fields, which, due to the existence of the bijection $\beta:M\to \mathcal{S}$, are nothing but elements of $M$. Since the elements of $so(3,\mathbb{D})$ can be identified with elements $b\in M$ and hence with screw fields, we conclude

\begin{proposition}
The Lie algebra $so(3,\mathbb{D})$ of $(M,\circ,OR)$ is canonically isomorphic with $se(3)$ and with $(M,-\times)$.
\end{proposition}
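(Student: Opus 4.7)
The plan is to produce the two canonical isomorphisms simultaneously, using $(M,-\times)$ as the pivot. The first isomorphism, $(M,-\times)\cong so(3,\mathbb{D})$, is essentially set up by the previous proposition: associate to $b\in M$ the $\circ$-antisymmetric operator $b\times$, which is $\mathbb{D}$-linear and a bijection of $M$ with $so(3,\mathbb{D})$ (with explicit inverse $B\mapsto\tfrac12\,m_i\times B(m_i)$). The real content to verify is bracket compatibility, which is a one-line consequence of the Jacobi identity \eqref{hhf}: for every $x\in M$,
\begin{equation*}
[\,b_1\times,\,b_2\times\,](x)=b_1\times(b_2\times x)-b_2\times(b_1\times x)=(b_1\times b_2)\times x,
\end{equation*}
so $[b_1\times,b_2\times]_{so(3,\mathbb{D})}=(b_1\times b_2)\times$. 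Defining then $\psi(b):=-b\times$ gives $[\psi(b_1),\psi(b_2)]=(b_1\times b_2)\times=\psi(-b_1\times b_2)$, which is exactly the bracket of $(M,-\times)$; equivalently one could use the ``right action'' convention $x\mapsto x\times b$. Naturality is manifest, as the map depends only on $\times$ and the module structure.

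For the identification with $se(3)$, I would invoke the map $\beta\colon M\to\mathcal{S}$ of Section~\ref{jxe}, which was shown there to intertwine $-\times$ on $M$ with the Lie bracket $[\,,\,]_L$ of vector fields on $\mathcal{S}$. Since screw fields on $E$ are equiprojective (cf.\ \eqref{bid}), they generate one-parameter groups of rigid motions and coincide with the Killing fields of the Euclidean metric; hence $(\mathcal{S},[\,,\,]_L)$ is canonically the Lie algebra of the identity component of $\mathrm{Isom}(E)$, namely $se(3)$. Composing with $\beta$ delivers the canonical isomorphism $(M,-\times)\cong se(3)$, and in turn $so(3,\mathbb{D})\cong se(3)$ via $\psi\circ\beta^{-1}$. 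As a by-product one also obtains, at the group level, a canonical embedding of $SO(3,\mathbb{D})$ into $\mathrm{Isom}(E)$: indeed any $\circ$-preserving $\mathbb{D}$-linear map preserves $\epsilon M$, hence the transversality condition defining $E$, and preserves $\langle\,,\,\rangle=\mathfrak{Du}(\circ)$, hence sends $\langle\,,\,\rangle$-null $3$-subspaces to such subspaces.

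The main obstacle is less computational than bookkeeping: the bracket on $(M,-\times)$ carries a minus sign, the commutator $[\,,\,]$ of screws is the negative of the vector-field Lie bracket $[\,,\,]_L$, and the operator commutator on $so(3,\mathbb{D})$ picks up its own sign through the Jacobi identity. Once these three sign conventions are aligned as above, the two canonical maps $b\mapsto -b\times$ and $\beta$ visibly commute with the respective brackets, and the proposition follows.
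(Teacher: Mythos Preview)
Your proposal is correct and follows essentially the same route as the paper: use the previous proposition together with the Jacobi identity to check that $\psi(b)=-b\times$ (the paper's $\gamma$) is a Lie algebra isomorphism $(M,-\times)\to so(3,\mathbb{D})$, and invoke the map $\beta$ of Section~\ref{jxe} together with the identification of $(\mathcal{S},[\,,\,]_L)$ with $se(3)$ for the other leg. The group-level remark and the sign-bookkeeping discussion are extra but harmless.
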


\begin{proof}
The Lie algebra $se(3)$ consists of the infinitesimal isometries of Euclidean space and can be identified $(\mathcal{S},[,]_L)$, that is, with the Lie algebra of screw fields. We found in Sec.\ \ref{jxe} that $\beta:M\to \mathcal{S}$ establishes a Lie algebra isomorphism between $(M,-\times)$ and $(\mathcal{S},[,]_L)$. Now if $a\times, b\times \in so(3,\mathbb{D})$ we have $[a\times, b\times ]x=a\times(b\times x))-b\times (a\times x)=(a\times b)\times x$ by Jacobi formula, which means that $\gamma: M\to so(3,\mathbb{D})$, $x\mapsto -x\times$ establishes a Lie algebra isomorphism between $(M,-\times)$ and $so(3,\mathbb{D})$.
\end{proof}

A related result was obtained in \cite{selig86} \cite[Sec.\ 7.6]{selig05}.

\subsection{Characterization of $\mathbb{D}$-linear independence for screws}

We recall that, with reference to the module $M$, by {\em linearity} we understand $\mathbb{D}$-linearity, sometimes writing in the last form just to stress it. Instead, when referring to the vector space $V$, by {\em linearity} we understand $\mathbb{R}$-linearity, of course.\\

\begin{proposition}
The elements of a finite subsets $\{z_i\}$, $z_i \in  M\backslash \epsilon M$, are $\mathbb{D}$-linearly independent iff so are their projections (resultants).
\end{proposition}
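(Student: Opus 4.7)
My plan is to prove both implications by exploiting the decomposition $\lambda_i = a_i + \epsilon b_i$ of any dual coefficient into real and dual parts, together with the observation that the projection $\pi\colon M\to V$ is $\mathbb{D}$-linear (noting that $V$ inherits the trivial $\mathbb{D}$-module structure since $\epsilon V = 0$) and that $\ker\pi = \epsilon M$. The identification $s_i = \pi(z_i)$ will turn a single $\mathbb{D}$-linear relation among the $z_i$ into two $\mathbb{R}$-linear relations among the $s_i$, and conversely.

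For the forward direction, I would assume the $z_i$ are $\mathbb{D}$-linearly independent and suppose $\sum_i a_i s_i = 0$ with $a_i \in \mathbb{R}$. Then $\sum_i a_i z_i$ lies in $\ker\pi = \epsilon M$, hence multiplying by $\epsilon$ yields $\sum_i (\epsilon a_i) z_i = 0$. The $\mathbb{D}$-linear independence forces $\epsilon a_i = 0$ in $\mathbb{D}$, which for real $a_i$ means $a_i = 0$. This step is immediate.

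The converse is the one carrying the content. Assuming the resultants $s_i$ are $\mathbb{R}$-linearly independent, I would take any relation $\sum_i \lambda_i z_i = 0$ with $\lambda_i = a_i + \epsilon b_i$ and project with $\pi$, obtaining $\sum_i a_i s_i = 0$ in $V$ (the dual parts disappear since $\epsilon$ annihilates $V$). Independence of the $s_i$ then gives $a_i = 0$ for all $i$, so the relation reduces to $\epsilon \sum_i b_i z_i = 0$. This means $\sum_i b_i z_i \in \ker\epsilon = \epsilon M$ (using the identity $\ker\epsilon = \mathrm{Im}\,\epsilon$ established in the opening paragraphs of Section \ref{vkd}); projecting once more yields $\sum_i b_i s_i = 0$, whence a second application of the $\mathbb{R}$-linear independence of the resultants gives $b_i = 0$, and therefore $\lambda_i = 0$.

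The only subtlety, and what I would regard as the main (mild) obstacle, is the correct bookkeeping of where the hypothesis $z_i \in M\setminus\epsilon M$ is actually used: it is automatically enforced by either side of the equivalence (if some $z_i \in \epsilon M$ then its resultant vanishes and the $s_i$ cannot be $\mathbb{R}$-independent, while simultaneously $\epsilon z_i = 0$ shows the $z_i$ are not $\mathbb{D}$-independent), so the hypothesis serves only to exclude the trivial degenerate case and need not be invoked explicitly in either direction of the argument.
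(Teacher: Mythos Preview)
Your proof is correct and follows essentially the same approach as the paper's own proof: both directions hinge on projecting to $V$, using $\ker\pi=\epsilon M=\ker\epsilon$, and iterating once to kill first the real parts $a_i$ and then the dual parts $b_i$. Your additional remark that the hypothesis $z_i\in M\setminus\epsilon M$ is not actually invoked in the argument (being automatically enforced by either side of the equivalence) is a nice observation that the paper does not make explicit.
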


\begin{proof}
Suppose that $\{z_i\}$ are linearly independent, then with $a_i\in \mathbb{R}$, $0=a_i \pi (z_i)=\pi(a_i z_i)$ implies $a_iz_i\in \epsilon M$, thus $(\epsilon a_i)z_i=0$ which implies $a_i=0$. Conversely, suppose that $\{\pi(z_i)\}$ are linearly independent, and let $\lambda_i=a_i+\epsilon b_i$. If $\lambda_iz_i=0$ we have projecting, $a_i\pi(z_i)=0$, which implies $a_i=0$ hence $\epsilon b_i z_i=0$, which implies $b_iz_i\in \epsilon M$, thus projecting this expression, $b_i\pi(z_i)=0$, which implies $b_i=0$.
\end{proof}

\subsection{Characterization of proportionality of two screws}

Two elements $z_1,z_2$ can be $\mathbb{D}$-linearly dependent without one being proportional to the other. Remember that $z_1,z_2\in M\backslash \epsilon M$ can be normalized without changing their axes. Assume they are normalized. If they are proportional, i.e.\ $z_1=\pm z_2$, then they have the same unoriented screw axis. Conversely, if they have the same unoriented screw axis $e_1=\pm e_2$, and as the axes have a point $O$ in common, from the formula $z_i=e_i^O$ we get $z_1=\pm z_2$.

We conclude that $z_1,z_2\in M\backslash \epsilon M$ are $\mathbb{D}$-linearly dependent but not proportional iff their unoriented axes have the same direction but do not coincide.

\subsection{Characterization of proportionality of three  screws}

As for $z_1,z_2,z_3\in M\backslash \epsilon M$, it can be that while $z_1,z_2$ are $\mathbb{D}$-linearly independent, $z_1,z_2, z_3$ are linearly dependent (and so the resultants $r_1,r_2,r_3$ are linearly dependent) but $z_3$   is not proportional to a linear combination of $z_1,z_2$.\\

\begin{proposition} \label{vbb}
Let  $z_1,z_2,z_3\in M\backslash \epsilon M$ and assume that $z_1,z_2$ are $\mathbb{D}$-linearly independent. We have $z_3=a z_1+bz_2$ for some $a,b\in \mathbb{D}$, iff the three axes intersect orthogonally the same line.
\end{proposition}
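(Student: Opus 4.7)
The approach is to characterize the orthogonal incidence of lines via the scalar product $\circ$, using a suitably chosen sliding vector as a test element. The \emph{key lemma} I would prove first is: for a normalized sliding vector $\nu = n^O$ representing the oriented line $L$ of direction $n\in V$ through $O\in E$, and for a proper screw $z\in M\setminus\epsilon M$ with resultant $s$ and screw field $\mathcal{s}$, formula \eqref{bqd} gives $z\circ\nu = s\cdot n + \epsilon\,\mathcal{s}(O)\cdot n$. This vanishes iff the axis of $z$ meets $L$ orthogonally: the real part $s\cdot n=0$ expresses perpendicularity of the two directions, and combined with the dual part $\mathcal{s}(O)\cdot n=0$ it forces the foot of the perpendicular from $O$ to the axis of $z$, namely $P_z-O=(s\times\mathcal{s}(O))/|s|^2$, to be parallel to $n$, hence to lie on $L$.

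For the forward direction ($\Rightarrow$), suppose $z_3=az_1+bz_2$. By the earlier characterization of $\mathbb{D}$-linear independence, the resultants $s_1,s_2$ are $\mathbb{R}$-linearly independent, so the axes of $z_1,z_2$ have distinct directions and admit a common orthogonal transversal $L$ with direction $n\propto s_1\times s_2$. Letting $\nu$ be a sliding vector along $L$, the key lemma gives $z_1\circ\nu = z_2\circ\nu = 0$, and $\mathbb{D}$-bilinearity yields $z_3\circ\nu = a(z_1\circ\nu)+b(z_2\circ\nu)=0$. By the key lemma again, the axis of $z_3$ also meets $L$ orthogonally.

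For the reverse direction ($\Leftarrow$), let $L$ be the common orthogonal transversal with direction $n$ through some $O\in L$, and set $\nu=n^O$. The key lemma gives $z_i\circ\nu=0$ for $i=1,2,3$, whose real parts yield $s_i\cdot n=0$, i.e., all three resultants lie in the plane $\Pi=n^\perp\subset V$. Since $s_1,s_2$ span $\Pi$, write $s_3=a_0 s_1+b_0 s_2$ with $a_0,b_0\in\mathbb{R}$, and set $w=z_3-a_0 z_1-b_0 z_2$. The resultant of $w$ vanishes, so $w=\epsilon v$ with $v\in V$; moreover $w\circ\nu=0$, and extracting the dual part gives $v\cdot n=0$, so $v\in\Pi$ and $v=a_1 s_1+b_1 s_2$ for some $a_1,b_1\in\mathbb{R}$. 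Using the identity $\epsilon z_i=\epsilon s_i$ (valid since $z_i-s_i^A\in\epsilon M$ and $\epsilon^2=0$), we obtain $w=\epsilon v=\epsilon a_1 z_1+\epsilon b_1 z_2$, whence $z_3=(a_0+\epsilon a_1)z_1+(b_0+\epsilon b_1)z_2$.

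The main technical point is the key lemma, which converts the geometric condition ``axes intersect orthogonally'' into the algebraic condition $z\circ\nu=0$; once this is established, both implications amount to linearity manipulations. The one delicate step is the reverse direction, where the non-invertibility of $\epsilon$ prevents us from simply ``dividing'' to express $w$ as a combination of $z_1,z_2$; instead we must recognise the dual correction $\epsilon v$ through the projection $\pi\colon M\to V$, which is precisely what the identity $\epsilon z_i=\epsilon s_i$ provides.
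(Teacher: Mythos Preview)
Your proof is correct and follows essentially the same route as the paper. In the forward direction both you and the paper test against the sliding vector $\nu$ (the paper's $z_4$) along the common perpendicular of the axes of $z_1,z_2$ and use $\mathbb{D}$-bilinearity. In the reverse direction both arguments first match the real parts $a_0,b_0$ via the resultants and then adjust the dual parts; the paper merely asserts that ``their dual parts can be chosen so as to generate the desired moment of $z_3$'', whereas you actually carry this out by showing the remainder $w=\epsilon v$ satisfies $v\in n^\perp=\mathrm{span}(s_1,s_2)$ and invoking $\epsilon z_i=\epsilon s_i$. Your key lemma (that $z\circ\nu=0$ characterizes orthogonal incidence with the line of $\nu$) is used implicitly by the paper here and is proved only later, for unit screws, in the section on $\mathfrak{Du}(x\circ y)=0$; making it explicit up front, as you do, is cleaner.
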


\begin{proof}
Observe that if $z_3=a z_1+bz_2$ with $a,b\in \mathbb{D}$, then applying $z_4 \circ$ with $z_4$ the oriented line orthogonal to the screw axes of $z_1$ and $z_2$ we get $z_4 \circ z_3=0$ which means that the axes of $z_1,z_2,z_3$ are orthogonal to the same axis.

For the converse assume that the three axes intersect orthogonally the same line then $r_1,r_2,r_3$ are linearly dependent, hence $r_3=\alpha r_1+\beta r_2$ with $\alpha,\beta\in \mathbb{R}$. Let $a,b\in \mathbb{D}$ be such that $\mathfrak{Re}(a)=\alpha$, $\mathfrak{Re}(b)=\beta$, then their dual parts can be chosen so as to generate the desired moment of $z_3$ at the intersection between the axis of $z_3$ and the common normal.
%This means that if the axis of $z_3$ does not intersect the line orthogonal to the axes of $z_1,z_2$ we are in the claimed situation. This is actually the only case, as using motor reduction one easily see that if  $r_1,r_2,r_3$ are linearly dependent and $z_3$ has axis that intersect the common orthogonal to the axes of $z_1$ and $z_2$ then $z_3$ is a $\mathbb{D}$-linear combination of $z_1,z_2$.
\end{proof}

\begin{proposition} \label{vbf}
Let  $z_1,z_2,z_3\in M\backslash \epsilon M$ and assume $z_1+z_2+z_3=0$, then the axes   intersect orthogonally a common line. Moreover, if the axes are parallel then they are coplanar. Finally, if all $z_i$ have  zero pitch then the axes are coplanar and concurrent.
\end{proposition}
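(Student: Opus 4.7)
The plan is to establish the three assertions in order, relying on Proposition~\ref{vbb} and the characterization of $\mathbb{D}$-linear dependence for screws from the preceding subsection. For the general claim I split on whether $z_1,z_2$ are $\mathbb{D}$-linearly independent. If yes, the identity $z_3=(-1)\,z_1+(-1)\,z_2$ is a non-trivial $\mathbb{D}$-linear combination, and Proposition~\ref{vbb} immediately produces a common line intersecting all three axes orthogonally. If no, the characterization recalled above says the axes of $z_1,z_2$ share a common direction $e\in V$; projecting $z_1+z_2+z_3=0$ to $V$ yields $r_3=-r_1-r_2$, still parallel to $e$, so all three axes are parallel. The common orthogonal line in this degenerate subcase will emerge from the parallel analysis below.

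Next, the parallel subcase. Pick a unit vector $e\in V$ along the common direction, write $r_i=\alpha_i e$ with $\alpha_i\in\mathbb{R}\setminus\{0\}$, fix a plane $\Pi\subset E$ perpendicular to $e$, and set $P_i=\ell_i\cap\Pi$. The constitutive equation (\ref{bjm}) gives $\mathcal{s}_i(O)=\mathcal{s}_{i,\parallel}+r_i\times(O-P_i)$ with $\mathcal{s}_{i,\parallel}\parallel e$, and the hypothesis translates into $\sum_i\mathcal{s}_i(O)=0$ for every $O\in E$. Decomposing this identity along $e$ and along $\Pi$, the $\Pi$-component reads $e\times\bigl[\sum_i\alpha_i (O-P_i)\bigr]=0$; since $O-P_i\in\Pi\perp e$, the vector in brackets must vanish. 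Varying $O\in\Pi$ first forces $\sum_i\alpha_i=0$, and the identity then reduces to $\alpha_2(P_1-P_2)+\alpha_3(P_1-P_3)=0$. Because $\alpha_2,\alpha_3\ne 0$, the points $P_1,P_2,P_3$ lie on a common line $\ell\subset\Pi$. Hence the three axes lie in the plane through $\ell$ parallel to $e$, proving coplanarity; moreover $\ell$ meets each axis orthogonally, which completes the general assertion as well.

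For the zero-pitch case, each $z_i$ being a sliding vector means $z_i\circ z_i\in\mathbb{R}$, hence $\langle z_i,z_i\rangle=0$. Polarizing $0=\langle z_3,z_3\rangle=\langle z_1+z_2,z_1+z_2\rangle$ and using the vanishing of the diagonal terms yields $\langle z_1,z_2\rangle=0$, and by symmetry $\langle z_i,z_j\rangle=0$ for all $i\ne j$. Since the comoment of two sliding vectors vanishes iff their axes meet or are parallel, every pair has this property. If some pair, say $\ell_1$ and $\ell_2$, meets at a point $P$, then $z_1,z_2$ both lie in the $\langle,\rangle$-null subspace $P\subset M$ consisting of sliding vectors through $P$, so $z_3=-(z_1+z_2)\in P$ and $\ell_3$ passes through $P$ as well; since $r_3\in\mathrm{span}(r_1,r_2)$, the line $\ell_3$ lies in the plane through $P$ spanned by $r_1,r_2$, which already contains $\ell_1$ and $\ell_2$. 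Thus the three axes are coplanar and concurrent at $P$. If no pair meets, every pair is parallel, forcing all three parallel, and coplanarity follows from the previous paragraph.

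I expect the main subtlety to be the degenerate subcase of the first assertion, where Proposition~\ref{vbb} does not apply because $z_1,z_2$ fail to be $\mathbb{D}$-linearly independent; handling it requires the explicit analysis on the plane $\Pi$ perpendicular to the common axis direction, from which the scalar constraint $\sum_i\alpha_i=0$ and the collinearity of the piercing points $P_i$ are simultaneously extracted.
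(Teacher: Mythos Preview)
Your proof is correct and follows the paper's broad three-part structure (Prop.~\ref{vbb} for the non-degenerate case; separate treatment of parallel axes; then the zero-pitch claim), but the individual arguments differ. For the parallel case the paper argues qualitatively: writing $-z_3=z_1+z_2$, if the axes were not coplanar the moments of $z_1,z_2$ at a point on the axis of $z_3$ would both be orthogonal to $e$ but in different directions and could not cancel. Your explicit piercing-point computation in a transverse plane $\Pi$ is sharper, yielding $\sum_i\alpha_i=0$ and collinearity of the $P_i$ simultaneously, and it identifies the common normal concretely as the line through the $P_i$. For the zero-pitch claim the paper uses the common normal found in the first part: at the foot $P_3$ on the axis of $z_3$ the screw field of $z_3$ vanishes, and one argues that $\mathcal{s}_1(P_3)+\mathcal{s}_2(P_3)=0$ forces the other two axes through $P_3$. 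Your route via polarization of $\langle z_3,z_3\rangle=0$ to get $\langle z_i,z_j\rangle=0$ for all pairs, followed by the characterization ``comoment of sliding vectors vanishes iff axes meet or are parallel'', is cleaner and more algebraic; note however that in the paper's ordering this characterization is established only in the subsequent subsection (together with the remark on proportional resultants at the end of Sec.~\ref{jxe}), so strictly it is a forward reference, though an elementary one.

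One wrinkle shared by both proofs: in the all-parallel zero-pitch subcase neither argument produces concurrence, and in fact three distinct parallel sliding vectors can sum to zero (take collinear piercing points with real weights $1,-2,1$). So the word ``concurrent'' in the statement must be read projectively or restricted to the non-parallel case; this is a defect of the proposition as stated, not of your argument.
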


\begin{proof}
If the axes are not parallel then  they  intersect orthogonally a common line as it follows from Prop.\ \ref{vbb}.

If they are parallel, we can write $-z_3=z_1+z_2$ then the moment generated by $z_1$ and $z_2$ on the axis of $z_3$ must be aligned with the same axis. If the axes are not complanar  this is impossible as the applied resultants $r_1$ and $r_2$ would generate two moments orthogonal to the axis of $z_3$ that have different direction and hence that do not cancel. Thus they are coplanar and hence intersect orthogonally a common line.

If two axes are parallel all of them are and so they are coplanar. If no pair of axes are parallel, the three axis intersect a common line. Let $P_3$ be the intersection between $z_3$ and the common line. As $z_3$ has zero pitch its screw field vanishes on $P_3$, but $-z_3=z_1+z_2$ and the total screw field generated by the other two elements cannot vanish unless they are coplanar and concurrent on $P_3$.
\end{proof}

\subsection{Characterization of intersection of axes of unit screws via $\mathfrak{Du}(x\circ y)=0$}

\begin{proposition}
Let $x,y\in M\backslash \epsilon M$, be normalized, i.e.\ $x \circ  x= y \circ  y=1$, and let the resultants be $\mathbb{D}$-linearly independent (so that $x$ and $y$ are linearly independent). We have  $\mathfrak{Du}(x\circ y)=0$ iff their screw axes intersect.\\
\end{proposition}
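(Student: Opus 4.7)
The plan is to exploit the motor-reduction expression (\ref{bqd}) for the $\circ$-product by evaluating it at a carefully chosen point of $E$. The normalization $x\circ x=1$ has real part $1$ and dual part $0$, so $x$ has vanishing pitch, is a sliding vector, and its screw field $\mathcal{s}_x$ vanishes identically on its screw axis; the same holds for $y$. Moreover the $\mathbb{D}$-linear independence of $x,y$ descends to $\mathbb{R}$-linear independence of the resultants $s_x,s_y\in V$: a real dependence $\alpha s_x+\beta s_y=0$ would force $\alpha x+\beta y\in\epsilon M$, hence $(\epsilon\alpha)x+(\epsilon\beta)y=0$, contradicting the hypothesis unless $\alpha=\beta=0$.

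For the forward direction, suppose the axes share a point $A\in E$. Taking $A$ as reduction point in (\ref{bqd}), both $\mathcal{s}_x(A)$ and $\mathcal{s}_y(A)$ vanish, so
\[
x\circ y \;=\; s_x\cdot s_y \;\in\;\mathbb{R},
\]
and therefore $\mathfrak{Du}(x\circ y)=0$. For the converse, assume $\mathfrak{Du}(x\circ y)=0$ and pick $A$ on the axis of $x$, so that $\mathcal{s}_x(A)=0$; then (\ref{bqd}) simplifies to $\mathfrak{Du}(x\circ y)=s_x\cdot\mathcal{s}_y(A)=0$. Picking also $B$ on the axis of $y$ (where $\mathcal{s}_y(B)=0$) and applying the constitutive equation (\ref{bjm}) I obtain $\mathcal{s}_y(A)=s_y\times(A-B)$, so the condition becomes the vanishing of the mixed product $[s_x,s_y,A-B]=0$ in $V$. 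Since $s_x,s_y$ are $\mathbb{R}$-linearly independent, the displacement $A-B$ must lie in the plane they span: $A-B=\lambda s_x+\mu s_y$ for some $\lambda,\mu\in\mathbb{R}$. The point $P:=A-\lambda s_x=B+\mu s_y$ then lies on both axes, proving they intersect.

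I foresee no substantial obstacle. The argument is essentially a transcription into $\mathbb{D}$-module language of the elementary fact that two lines in $V$ with linearly independent directions meet precisely when the vector connecting one point on each lies in the plane spanned by the directions; the role of (\ref{bqd}) is only to repackage this coplanarity as the vanishing of a single dual number. The one point requiring care is the legitimacy of ``pick $A$ on the axis of $x$'', but this rests on the existence of the screw axis for zero-pitch normalized elements, already established in the subsection on the screw axis.
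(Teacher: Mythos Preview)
Your proof is correct. The direction ``axes intersect $\Rightarrow \mathfrak{Du}(x\circ y)=0$'' is essentially the paper's argument restated: the paper observes that both $x$ and $y$ lie in the $\langle,\rangle$-null subspace $O\in E$ at the intersection, which is exactly your motor-reduction computation at $A$.

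For the other direction, however, you take a genuinely different route. The paper stays entirely inside the $\mathbb{D}$-module: it completes $x$ to an orthonormal basis $(x,u,v)$ via Gram--Schmidt with $u=y-(x\circ y)x$, notes that the real span of this basis is by construction a point $O\in E$, and then uses the hypothesis $x\circ y\in\mathbb{R}$ to conclude that $y=(x\circ y)x+u$ is a \emph{real} combination of $x$ and $u$, hence $y\in O$. Your argument instead descends immediately to the screw-field picture on $E$ and reduces the question to the classical coplanarity test $[s_x,s_y,A-B]=0$ for two lines in three-space. Your approach is more elementary and geometrically transparent; the paper's approach better illustrates its own thesis that Euclidean incidence can be read off directly from the module structure without ever passing to vector fields on $E$. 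Both are short and clean.
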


As a consequence, $x\circ y=0$ iff their axes intersect orthogonally. We say that the axes (or $x$ and $y$) are {\em  normal}.

\begin{proof}
Assume $\mathfrak{Du}(x\circ y)=0$.
We can find a third linearly independent element (just choose its resultant so that it is linearly independent with the other two), define $u= y- ( x \circ  y) x$ and proceed with the Gram-Schmidt orthogonalization. So there is an orthonormal basis $(x,u,v)$ whose real span is a point $O$ of $E$. But due to $\mathfrak{Du}(x\circ y)=0$, $y$ is in the real span of $(x,u)$ and hence in $O$. This shows that if two normalized screws (lines) are such that $\mathfrak{Du}(x\circ y)=0$ then their screw axes intersect (hence if $x\circ y=0$ then they intersect orthogonally).

Conversely, if the axes intersect we can find $O\in E$ in both, namely $x,y\in O$ where $O$ is a  $\langle,\rangle$-null subspace of real dimension $3$ transverse to $\epsilon M$. This implies  $x\circ y\in \mathbb{R}$ as $O$ is $\langle,\rangle$-null, and hence $\mathfrak{Du}(x\circ y)=0$.
\end{proof}

%The module  of a vector $x\in M$ is defined by $\vert x \vert:=\sqrt{\mathfrak{Re}(x\circ x)}\in (0,+\infty)$.

\subsection{Definition of dual angle of screws and scalar product}

Let $x,y\in M\backslash \epsilon M$, $x\circ x=y \circ y=1$.
For every $\lambda \in \mathbb{R}$ we have  $\mathfrak{Re}((x+\lambda y)\circ (x+\lambda y)) \ge 0$ which leads to the Cauchy-Schwarz inequality
\[
\vert\mathfrak{Re}(x\circ y)\vert \le  1,
\]
 where equality holds iff there is some $\bar \lambda$ for which $\mathfrak{Re}((x+\bar \lambda y)\circ (x+\bar \lambda y)) =0$, namely $x+\bar \lambda  y\in \epsilon M$, that is $x$ and $y$ project to proportional elements in $V$ and hence, as shown in Sec.\ \ref{jxe}, $\mathfrak{Du}(x\circ y)=0$ and hence $x\circ y=\pm 1$.

Observe that the function $\cos: \mathbb{D}\to \mathbb{D}$ has the form $\cos (a+\epsilon b)=\cos a- \epsilon b \sin a$ thus it sends bijectively $(0,\pi)+\epsilon \mathbb{R}$ to $(-1,1)+\epsilon \mathbb{R}$. Moreover, it sends $\{0,\pi\}$ bijectively to $\{1,-1\}$. This means that there is one and  only one dual number $\Theta\in \{0,(0,\pi)+\epsilon \mathbb{R}, \pi\}$ such that
\[
\cos \Theta =x \circ y.
\]
A standard calculation via Eqs.\ (\ref{bjm}),(\ref{bqd}) shows that $\Theta=\theta+\epsilon d$ where $\theta$ is the angle between the resultants and $d$ is the distance between the axes (but this fact will not be used in the next proofs).

The previous formula shows, with obvious meaning of the notation, $\Theta_{xy}=\Theta_{yx}$, due to the symmetry of the scalar product, and $\Theta_{xy}=\pi-\Theta_{(-x)y}$. Thus $\cos \Theta_{xy}=-\cos \Theta_{(-x)y}$, $\sin \Theta_{xy}=\sin \Theta_{(-x)y}$.

Since every $x\in M\backslash \epsilon M$, can be normalized $x \to \hat x=x/\vert x \vert$ we have for screws rather than oriented lines
\begin{equation} \label{biq}
x \circ y=\vert x\vert \, \vert y\vert \cos \Theta.
\end{equation}
Taking the real part gives the Cauchy-Schwarz inequality for screws.

\subsection{Dual angle and cross product}

Let us consider the cross product of unit screws $x,y\in M\backslash \epsilon M$ with linearly independent resultants $e_1,e_2$. We know that $x\times y$ has resultant $e_1\times e_2$. Let $x\times y=\alpha u $ where $\alpha\in
\mathbb{D}$, $\mathfrak{Re}(\alpha)>0$, and $u$ is a unit screw of direction $e_1\times e_2$.
%We must have that $d$ is invertible since $x\times y$ has non-vanishing resultant, i.e. $x\times y \notin \epsilon M$. Moreover, passing to the resultants we find $\mathfrak{\alpha}>0$.
Now we have
\begin{align*}
0&=x \times y \circ x=\alpha u\circ x,  \\
0&=x \times y \circ y=\alpha u\circ y ,
\end{align*}
 which imply $u\circ x=u\circ y=0$. This proves that the axis of $u$ intersects both the axis of $x$ and $y$ orthogonally.
It remains to find  $\alpha$. Observe that by Eq.\ (\ref{ciq})
\[
\alpha^2=(x \times y) \circ (x \times  y) = (x \circ  x)\,(y \circ y) - (x \circ y)\,(y \circ x)=1-\cos^2 \Theta, \ \ \Rightarrow \ \ \alpha =\sin \Theta.
\]
We conclude that for two unit screws with linearly independent resultants, denoting with $u$ the unit screw of the oriented line orthogonal to both axes and oriented as $e_1\times e_2$,
\[
x \times y= \sin \Theta\, u
\]
%As for the cross product of unit screws with linearly independent resultants, a standard calculation via Eqs.\ (\ref{bjm}),(\ref{dir}) shows that, denoting with $u$ the unit screw of the oriented line orthogonal to both axes and oriented as $e_1\times e_2$,
%\[
%x \times y= \sin \Theta u,
%\]
so that for general screws with linearly independent resultants
\begin{equation} \label{zjs}
x \times y=\vert x \vert \,\vert y \vert  \sin \Theta\, u.
\end{equation}
Actually this formula holds also for linearly dependent resultants as it is easy to check using Eqs.\ (\ref{bjm}),(\ref{dir}). Since the coefficient in front of $u$ in Eq.\ (\ref{zjs}) is a product of factors with  positive real part we have
\begin{equation} \label{vqx}
 \vert x \times y\vert=\vert x \vert\, \vert y \vert \sin \Theta.
\end{equation}

 %The following  results could have been obtained with the transference principle. Our verification confirms that the proof  is indeed formally the vectorial proof of the ordinary laws in trigonometry (just replace each screw  with a vector).

\section{Manifestations of the transference principle} \label{squa}

In this section we obtain some results in screw theory that could have been obtained by using the transference principle. Indeed, the point of this section is to show that these results admit  proofs that mimic proofs in vector geometry.

\subsection{Linear dependence, proportionality and cross product}

We have shown above that  $z_1,z_2\in M\backslash \epsilon M$ are linearly independent iff  so are the resultants $r_1,r_2\in V$. Since $z_1\times z_2$ has resultant $r_1\times r_2$ this means that $z_1,z_2\in M\backslash \epsilon M$ are linearly independent iff $r_1\times r_2\ne 0$. Let us prove\\

\begin{proposition}
 $z_1,z_2\in M\backslash \epsilon M$ are proportional iff $z_1\times z_2=0$.
\end{proposition}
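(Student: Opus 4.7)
The implication ``proportional $\Rightarrow z_1 \times z_2 = 0$'' is immediate: if $z_1 = \lambda z_2$ with $\lambda \in \mathbb{D}$, then $z_1 \times z_2 = \lambda (z_2 \times z_2) = 0$ by antisymmetry of the cross product.

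For the converse, I would argue at the level of resultants first and then promote the relation to a $\mathbb{D}$-linear one. Let $r_i := \pi(z_i) \in V$. By Eq.\ (\ref{dir}), the resultant of $z_1 \times z_2$ is $r_1 \times r_2$, so the hypothesis forces $r_1 \times r_2 = 0$. Since $z_i \notin \epsilon M = \ker \pi$, we have $r_1, r_2 \ne 0$, hence there exists $\mu \in \mathbb{R}$ with $r_2 = \mu r_1$. Then $z_2 - \mu z_1 \in \ker \pi = \epsilon M$, and via the isomorphism $\epsilon : V \to \epsilon M$ recalled at the start of Section \ref{vkd} we may write $z_2 - \mu z_1 = \epsilon w$ for a unique $w \in V$; fix any lift $w^A \in A$ of $w$ at some $A \in E$.

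Now a direct computation, using $z_1 \times z_1 = 0$ and $\epsilon^2 = 0$, gives
\[
0 \;=\; z_1 \times z_2 \;=\; z_1 \times \bigl(\mu z_1 + \epsilon w^A\bigr) \;=\; \epsilon\,(z_1 \times w^A).
\]
Since $\ker \epsilon = \epsilon M$ on $M$, this says $z_1 \times w^A \in \epsilon M$, so its resultant $r_1 \times w$ vanishes, and hence $w = \nu r_1$ for some $\nu \in \mathbb{R}$ (using $r_1 \ne 0$). Observing that $\epsilon z_1 = \epsilon r_1$ (because the dual part of $z_1$ is annihilated by $\epsilon$), we obtain $z_2 - \mu z_1 = \epsilon w = \nu\, \epsilon z_1$, i.e.\ $z_2 = (\mu + \nu \epsilon)\, z_1$, which is the sought-after proportionality. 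The main point to keep track of is the interplay between $V$ and $\epsilon M$ via the isomorphism $\epsilon$: the single module equation $z_1 \times z_2 = 0$ must be shown to encode both a real scalar relation between the resultants (giving $\mu$) and a further real scalar relation on the dual part (giving $\nu$), which together assemble into one dual-number coefficient.
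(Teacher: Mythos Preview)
Your proof is correct. It differs from the paper's argument in a meaningful way, so a short comparison is in order.

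The paper first normalizes $z_1,z_2$, then uses two earlier geometric facts: that normalized screws with proportional resultants automatically satisfy $\mathfrak{Du}(z_1\circ z_2)=0$ (end of Sec.~\ref{jxe}), and that this forces their axes to intersect; combined with parallel resultants this makes the axes coincide, whence $z_1=\pm z_2$. Your route is purely module-theoretic: you extract the real relation $r_2=\mu r_1$ from the resultant layer, then feed the residual $z_2-\mu z_1=\epsilon w$ back into the hypothesis to squeeze out a second real relation $w=\nu r_1$ from the dual layer, and reassemble $\mu+\nu\epsilon$ as a single $\mathbb{D}$-coefficient. This avoids normalization, the scalar product, and any appeal to the axis/intersection picture; it relies only on $\ker\epsilon=\epsilon M$ and the fact that $\pi$ intertwines the two cross products. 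The paper's proof, by contrast, is a cleaner instance of the transference principle it is advertising, since it mirrors the usual Euclidean argument ``$x\times y=0\Rightarrow$ same direction'' at the level of lines. Both are short; yours is more self-contained, the paper's more in keeping with the surrounding narrative.
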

\begin{proof}
One direction is obvious. For the other direction, we can assume without loss of generality that $z_1\circ z_1=z_2\circ z_2=1$. So suppose $z_1\times z_2=0$, then $r_1 \times r_2=0$, namely the resultants are proportional, and by a previous observation $\mathfrak{Du}(z_1\circ z_2)=0$ which implies that the axes of $z_1$ and $z_2$ intersect and hence are the same, which implies that $z_1,z_2$ are proportional.
\end{proof}

\subsection{Linear dependence, proportionality and mixed product}

Similarly, we have shown above that  $z_1,z_2,z_3\in M\backslash \epsilon M$ are linearly independent iff  so are the resultants $r_1,r_2,r_3\in V$. The real part of $z_1 \times z_2 \circ z_3$ is the scalar product of the resultant of $z_1 \times z_2$ with $r_3$, namely $r_1\times r_2 \cdot r_3$. Thus  $z_1,z_2,z_3\in M\backslash \epsilon M$ are linearly independent iff $\mathfrak{Re}(z_1\times z_2 \cdot z_3)\ne 0$.\\

\begin{proposition} \label{jiq}
Let $z_1,z_2,z_3\in M\backslash \epsilon M$. The mixed product $z_1 \times z_2 \circ z_3$ vanishes iff their axes are "parallel or intersect orthogonally a common line".\\
\end{proposition}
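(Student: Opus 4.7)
The plan is to expand the mixed product using the expansion displayed at the end of Sec.\ \ref{jxe},
\[
z_1\times z_2\circ z_3 = r_1\times r_2\cdot r_3 + \epsilon\bigl[\mathcal{s}_1(A)\cdot r_2\times r_3 + \mathcal{s}_3(A)\cdot r_1\times r_2 + \mathcal{s}_2(A)\cdot r_3\times r_1\bigr],
\]
(where $r_i=\pi(z_i)$) and to case-split on the $\mathbb{R}$-rank of $\{r_1,r_2,r_3\}\subset V$. Vanishing of the real part forces this rank to be at most $2$, and since $z_i\notin\epsilon M$ implies $r_i\neq 0$, only ranks $1$ and $2$ actually occur.

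If the rank is $1$, all three resultants are parallel (so the three axes are parallel), and every cross product $r_i\times r_j$ appearing in the expansion vanishes; hence $z_1\times z_2\circ z_3=0$ identically. Conversely, parallel axes force rank $1$ and therefore vanishing. This disposes of the ``parallel'' alternative in the statement.

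If the rank is $2$, one may assume after relabelling that $r_1,r_2$ are $\mathbb{R}$-linearly independent, so $z_1,z_2$ are $\mathbb{D}$-linearly independent. Writing $r_3=\alpha r_1+\beta r_2$ with $\alpha,\beta\in\mathbb{R}$, the element $z_3-\alpha z_1-\beta z_2\in M$ has zero resultant and therefore equals $\epsilon v$ for a unique $v\in V$. Using the total antisymmetry of the mixed product together with the elementary identity $w\circ(\epsilon v)=\epsilon\,\pi(w)\cdot v$ (immediate from Eq.\ (\ref{bqd}) in any orthonormal lift $\{e_i^A\}$), one obtains
\[
z_1\times z_2\circ z_3 \;=\; z_1\times z_2\circ(\epsilon v) \;=\; \epsilon\,(r_1\times r_2)\cdot v.
\]
Since $r_1\times r_2\neq 0$, this vanishes iff $v\in\mathrm{span}_{\mathbb{R}}(r_1,r_2)$, which is exactly the condition that the dual parts of $\alpha,\beta$ can be adjusted to produce coefficients $a,b\in\mathbb{D}$ with $z_3=az_1+bz_2$. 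By Prop.\ \ref{vbb}, this in turn is equivalent to the three axes intersecting orthogonally a common line. Together with the rank $1$ case, both implications follow.

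The main conceptual step is the reduction of rank $2$ to Prop.\ \ref{vbb} via the auxiliary element $v$; everything else is bookkeeping with the real/dual decomposition, so I do not anticipate a genuine obstacle.
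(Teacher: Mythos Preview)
Your proof is correct, and the case split on the $\mathbb{R}$-rank of the resultants matches the paper's structure. The difference lies in the rank $2$ argument. The paper treats the two implications separately: for the forward direction it invokes Prop.\ \ref{vbb} just as you do, but for the converse it argues geometrically, normalizing and writing $z_1\times z_2=\alpha u$ with $u$ a unit screw, observing that the axis of $u$ is the common perpendicular to the axes of $z_1$ and $z_2$, and then reading $u\circ z_3=0$ as the statement that the axis of $z_3$ is also normal to that of $u$. Your device of setting $z_3-\alpha z_1-\beta z_2=\epsilon v$ and collapsing the mixed product to $\epsilon\,(r_1\times r_2)\cdot v$ is more algebraic and handles both implications of the rank $2$ case at once, by characterizing the vanishing as $v\in\mathrm{span}_{\mathbb{R}}(r_1,r_2)$ and then quoting Prop.\ \ref{vbb} for the geometric content. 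The paper's route, in contrast, exhibits the common normal line concretely as the axis of $z_1\times z_2$, which is geometrically more transparent and fits the spirit of the transference principle emphasized in that section.
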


Observe that if the system is in equilibrium, i.e.\ $z_1+z_2+z_3=0$, this case applies.
\begin{proof}
Suppose that $z_1,z_2,z_3\in M\backslash \epsilon M$ have parallel axes, then they have the form (for any chosen $O\in E$) $z_i=\alpha_i (e^O+\epsilon \mathcal{s}_i(O))$, with $\alpha_i\in \mathbb{D}\backslash \epsilon \mathbb{R}$, and $e\in V$ a vector giving the direction of the axes. This easily implies  $z_1 \times z_2 \circ z_3=0$ by application of $e^O$ on the mixed product or by application of $\epsilon$. Another case in which it vanishes is that in which $z_1,z_2,z_3\in M\backslash \epsilon M$ have axes orthogonal to the same line. Indeed, by  Prop.\ \ref{vbb} in this case one screw is a $\mathbb{D}$-linear combination of the other two and hence the triple product vanishes.

Conversely, if $z_1,z_2,z_3\in M\backslash \epsilon M$ are such that $z_1 \times z_2 \circ z_3=0$ then the axes are either parallel or orthogonal to the same line. Without loss of generality we can assume that they are normalized. If  the axes are not parallel, then we can assume that $z_1,z_2$ have linearly independent resultant. Thus  $z_1 \times z_2= \alpha u$ for some $\alpha \in \mathbb{D}\backslash \epsilon \mathbb{R}$, and for some normalized screw $u$. The axis of $u$ is orthogonal to that of $z_1$ and $z_2$ and intersects both. From $u\circ z_3=0$ it follows that also the axis of $z_3$ intersects that of $u$ orthogonally.
\end{proof}

It can be observed that $z_1\times z_2 \cdot z_3=0$ does not imply that one $z$ can be expressed as a $\mathbb{D}$-linear  combination of the other two, e.g.\ the case of three normalized screws with parallel non-complanar axes.\\

\begin{proposition} \label{jip}
Let $z_1,z_2,z_3\in M\backslash \epsilon M$. Suppose that they are pairwise linearly independent (i.e.\ the resultants are pairwise linearly independent/ no pair of axes are parallel), then the following statements are equivalent
\begin{itemize}
\item[(i)]  $z_1 \times z_2 \circ z_3=0$ ,
\item[(ii)] some $z_i$ is a $\mathbb{D}$-linear combination of the other two,
\item[(iii)] the axes intersect orthogonally a common line.
\end{itemize}
If these statements hold then (ii) holds with `every' replacing `some'.
\end{proposition}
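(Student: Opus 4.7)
The plan is to derive the three equivalences by chaining together the two previous propositions once the hypothesis of pairwise linear independence is used to rule out degenerate configurations, and then to handle the ``every'' addendum by a careful invertibility argument in $\mathbb{D}$.

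First I would observe that pairwise linear independence of $z_1,z_2,z_3$ is, by the earlier characterization, equivalent to pairwise linear independence of the resultants $r_1,r_2,r_3\in V$, which in turn says that no two axes have the same direction (no two are parallel). With this in hand, the equivalence of (i) and (iii) is immediate from Proposition \ref{jiq}: that proposition says $\mathfrak{Re}(z_1\times z_2\circ z_3)=0$ exactly when the axes are parallel or intersect orthogonally a common line, and the pairwise assumption kills the first alternative. (The vanishing of $z_1\times z_2\circ z_3$ in $\mathbb{D}$ means both its real and dual parts vanish, but since the real part is $r_1\times r_2\cdot r_3$ the standard mixed product, the same dichotomy applies.) The equivalence of (ii) and (iii) is likewise a direct invocation of Proposition \ref{vbb} applied to the pair $z_1,z_2$, which is linearly independent by hypothesis.

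For the last sentence---that if the equivalent conditions hold, then \emph{every} $z_i$ is a $\mathbb{D}$-linear combination of the other two---I would start from some decomposition, say $z_3=a z_1+b z_2$ with $a,b\in\mathbb{D}$ given by (ii), and show that both $a$ and $b$ are invertible in $\mathbb{D}$. Projecting the identity to $V$ via $\pi$ gives $r_3=\mathfrak{Re}(a)\,r_1+\mathfrak{Re}(b)\,r_2$. If $\mathfrak{Re}(a)=0$, then $r_3=\mathfrak{Re}(b)\,r_2$, i.e.\ $r_2,r_3$ would be $\mathbb{R}$-proportional, contradicting the pairwise linear independence of $z_2,z_3$ (equivalently, of $r_2,r_3$); similarly $\mathfrak{Re}(b)\neq 0$. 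A dual number with nonzero real part is invertible in $\mathbb{D}$, so we may rewrite the identity as $z_1=a^{-1}z_3-a^{-1}b\,z_2$ and as $z_2=b^{-1}z_3-b^{-1}a\,z_1$, yielding the other two combinations.

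I expect no serious obstacle here: the substantive geometric content is already packaged inside Propositions \ref{vbb} and \ref{jiq}, so the work is bookkeeping. The only subtle point worth highlighting is precisely the invertibility step in the final paragraph: over a field the ``some $\Rightarrow$ every'' implication is automatic, but over $\mathbb{D}$ one has to exhibit that the coefficients land in the units $\mathbb{D}^\times=\mathbb{D}\setminus\epsilon\mathbb{R}$, and this is exactly where the pairwise linear independence of all three pairs (not just $z_1,z_2$) is used.
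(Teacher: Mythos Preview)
Your approach is correct and essentially matches the paper's: chain Propositions~\ref{jiq} and~\ref{vbb} under the pairwise independence hypothesis. One small slip to fix: Proposition~\ref{jiq} is stated for the full mixed product $z_1\times z_2\circ z_3=0$ in $\mathbb{D}$, not for its real part, so your opening sentence misquotes it and the parenthetical patch is unnecessary---just cite it as is. For the final ``every'' clause the paper takes a shorter route, simply re-invoking Proposition~\ref{vbb} once for each of the three pairs (each pair being linearly independent by hypothesis, and (iii) being symmetric in the indices); your invertibility argument for the coefficients $a,b$ is an equally valid alternative and, as you note, makes explicit exactly where the \emph{pairwise} (not just single-pair) independence is consumed.
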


\begin{proof}
The equivalence of (i) and (iii) follows from Prop.\ \ref{jiq}, (ii) $ \rightarrow $ (i) is clear, while (iii) $ \rightarrow $ (ii) and the last statement follow from Prop.\ \ref{vbb}.
\end{proof}

\subsection{$\mathbb{R}$-linear dependence}

\begin{proposition}
Let $z_1,z_2\in M\backslash \epsilon M$. They are  $\mathbb{R}$-linear dependent iff they share the same axis and the same pitch.

Let $z_1,z_2.z_3\in M\backslash \epsilon M$.  If they are  $\mathbb{R}$-linear dependent then the axes intersect orthogonally the same line. Moreover, if they are parallel they are coplanar.
\end{proposition}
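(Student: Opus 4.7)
The plan is to normalize in part one and to split on the configuration of the resultants in part two, borrowing Proposition~\ref{vbb} when possible and falling back on a direct moment computation in the unavoidable parallel case.

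For the two-screw statement, I would exploit the normal form $z=(a+\epsilon b)u$ with $a>0$ and $u\circ u=1$, derived earlier, whose pitch is $p=b/a$ and whose axis is the line of $u$. If $z_1=\lambda z_2$ with $\lambda\in\mathbb{R}^*$, then $\sqrt{z_1\circ z_1}=|\lambda|(a_2+\epsilon b_2)$, so $a_1=|\lambda|a_2$ and $b_1=|\lambda|b_2$, giving $p_1=p_2$; comparing $u_1=z_1/|z_1|$ with $u_2$ one finds $u_1=\mathrm{sign}(\lambda)\,u_2$, so the (unoriented) axes coincide. Conversely, equal pitch $p$ yields $b_i=p\,a_i$, and equal unoriented axis means $u_1=\pm u_2$, so $z_1=a_1(1+\epsilon p)u_1=\pm(a_1/a_2)z_2$, an $\mathbb{R}$-linear relation.

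For the three-screw statement, write $\sum_i\alpha_i z_i=0$ with $\alpha_i\in\mathbb{R}$ not all zero (necessarily at least two nonzero, since otherwise some $z_i$ would vanish). Project to $V$ to get $\sum_i\alpha_i r_i=0$. In the \emph{generic case}, where the $r_i$ are not all parallel, reindex so that $r_1,r_2$ are $\mathbb{R}$-independent and thus $z_1,z_2$ are $\mathbb{D}$-independent. Then $\alpha_3\neq 0$, for otherwise part one would force identical axes of $z_1,z_2$, contradicting the independence of their resultants. Solving yields $z_3=-(\alpha_1/\alpha_3)z_1-(\alpha_2/\alpha_3)z_2$ with real (hence dual) coefficients, and Proposition~\ref{vbb} concludes that the three axes intersect orthogonally a common line.

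In the \emph{parallel case}, write $r_i=\rho_i e$ with $\rho_i\in\mathbb{R}^*$, fix $O\in E$, and let $v_i\in e^\perp$ be the foot of the perpendicular from $O$ to the axis of $z_i$, so that $\mathcal{s}_i(O)=p_i\rho_i e+\rho_i\,(e\times v_i)$. Evaluating $\sum_i\alpha_i z_i=0$ (via the screw field at $O$, or the motor reduction) and splitting along $e$ and $e^\perp$ yields $\sum_i\alpha_i\rho_i=0$ together with $\sum_i\alpha_i\rho_i v_i=0$, since the isomorphism $e\times\colon e^\perp\to e^\perp$ eliminates the cross product. Setting $\beta_i:=\alpha_i\rho_i$ (not all zero, as the $\rho_i$ are), the two identities express affine dependence of $v_1,v_2,v_3$ in the two-dimensional plane $e^\perp$, hence their collinearity, hence coplanarity of the parallel axes through $O+v_i$.

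The main obstacle is precisely this parallel case: Proposition~\ref{vbb} is unavailable because no two of the $z_i$ are $\mathbb{D}$-linearly independent, and one cannot rewrite $z_3$ as a dual combination of $z_1,z_2$ in general. One has to descend to the moment level and recognize that the real linear dependence in $M$ forces not mere linear but \emph{affine} dependence of the transverse feet $v_i$, which is the exact condition promoting parallelism to coplanarity.
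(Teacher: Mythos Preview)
Your argument is correct. For part one you unpack the normal form explicitly where the paper merely observes that a real scalar multiple preserves axis and pitch; the content is identical. For part two the routes diverge: the paper absorbs the nonzero real coefficients into the $z_i$ (which does not change axes), reducing to the equilibrium $z_1+z_2+z_3=0$, and then invokes Proposition~\ref{vbf}, which already packages both the non-parallel and parallel cases. You instead split on the configuration of the resultants, citing Proposition~\ref{vbb} when two resultants are independent and performing a direct moment computation when all three are parallel. Your parallel computation---extracting $\sum\beta_i=0$ and $\sum\beta_i v_i=0$ and recognising this as affine dependence of the perpendicular feet in $e^\perp$---is arguably more transparent than the informal moment-cancellation argument inside the proof of Proposition~\ref{vbf}. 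The paper's route is shorter because the case split has already been done once; yours is more self-contained. One small point worth making explicit: in the parallel case you establish coplanarity, and should note (trivially) that coplanar parallel lines admit a common orthogonal transversal, so that the first conclusion holds there as well.
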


\begin{proof}
Observe that two screws $z_1,z_2\in M\backslash \epsilon M$ that are $\mathbb{R}$-linear dependent are also linear $\mathbb{D}$-dependent, however the combination $az_1+b z_2=0$ with $a,b\in \mathbb{R}$, implies that they are proportional through a real number factor, thus they share the same axis and furthermore have the same pitch. Clearly, this condition is not only necessary but also sufficient for  $\mathbb{R}$-linear dependence.

Let us consider three screws $z_1,z_2,z_3\in M\backslash \epsilon M$ that are  $\mathbb{R}$-linear dependent. If two of them are $\mathbb{R}$-linear dependent then two axes coincide and the claim is clear. Otherwise all the real coefficients of the vanishing real combination are different from zero and they can be reabsorbed (without  changing the axis) so as to reduce ourselves to the case $z_1+z_2+z_3$ which proves the claim by Prop.\ \ref{vbf}.
%, then they are $\mathbb{D}$-linear dependent and one, say $z_3$, can be expressed as a linear combination of the other two through real numbers, $z_3=a z_1+b z_2$, so according to a previous result, the axes of $z_1,z_2,z_3$ either intersect orthogonally the same line or are parallel.
\end{proof}

The condition on the orthogonality of the axes implies (Prop.\ \ref{vbb})
$z_3=a z_1+bz_2$ where, however, $a,b\in \mathbb{D}$.
%Any of these conditions on the axes of the three screws is not sufficient to imply that the screw are $\mathbb{R}$-linear dependent. For instance, in the former case we know, from a previous result, that if this axes-orthogonality condition is satisfied then $z_3=a z_1+b z_2$ where, however, $a,b\in \mathbb{D}$.
In order for the coefficients to be in the reals the axis of $z_3$ would need to stay in a surface called {\em cylindroid} and its pitch would also need to be constrained \cite{dimentberg68}.\\

\begin{proposition} \label{qid}
Let $z_1,z_2.z_3\in M\backslash \epsilon M$ be sliding vectors (zero pitch) with parallel axes. They are $\mathbb{R}$-linear dependent iff their axes are coplanar.
\end{proposition}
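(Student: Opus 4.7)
The forward implication (linear dependence implies coplanar axes) is already contained in the preceding proposition, so I would cite it and focus on the converse.

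For the converse, I would reduce to a linear-algebra statement in $V$ by parametrizing explicitly. Let $e \in V$ be a unit vector along the common axis direction, so each resultant satisfies $r_i = \alpha_i e$ with $\alpha_i \in \mathbb{R} \setminus \{0\}$. Pick a reference point $O \in E$ on the axis of $z_1$ and points $P_i$ on each axis (taking $P_1 = O$). Since $z_i$ has zero pitch and $P_i$ lies on its axis, $\mathcal{s}_i(P_i) = 0$, so by the constitutive equation $\mathcal{s}_i(O) = -\alpha_i\, e \times (P_i - O)$. Using the decomposition $M = O \oplus \epsilon M$ this gives
\begin{equation*}
z_i \;=\; \alpha_i\, e^O \;-\; \epsilon\, \alpha_i\, \big( e \times (P_i - O) \big).
\end{equation*}
A real relation $\sum_i \lambda_i z_i = 0$ then splits into the resultant equation $\sum_i \lambda_i \alpha_i = 0$ and the $\epsilon$-part equation $e \times \bigl(\sum_i \lambda_i \alpha_i (P_i - O)\bigr) = 0$, the latter saying that $\sum_i \lambda_i \alpha_i (P_i - O) \in \mathbb{R} e$.

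Setting $\mu_i = \lambda_i \alpha_i$, which preserves the zero/nonzero pattern because $\alpha_i \neq 0$, nontrivial $\mathbb{R}$-linear dependence of the $z_i$ amounts to finding $(\mu_1,\mu_2,\mu_3) \neq 0$ with $\mu_1 + \mu_2 + \mu_3 = 0$ and $\mu_2(P_2 - P_1) + \mu_3(P_3 - P_1) \in \mathbb{R} e$. Coplanarity of three parallel axes is precisely linear dependence of $\{e, P_2 - P_1, P_3 - P_1\}$ in $V$, i.e.\ the existence of reals $a,b,c$ not all zero with $a e + b(P_2 - P_1) + c(P_3 - P_1) = 0$; since $(b,c) = (0,0)$ would force $a = 0$ too (as $e \neq 0$), we have $(b,c) \neq (0,0)$. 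Setting $\mu_2 = b$, $\mu_3 = c$, $\mu_1 = -b - c$ satisfies both conditions nontrivially, and $\lambda_i = \mu_i/\alpha_i$ yields the desired real combination. The proof presents no conceptual obstacle; the only thing to watch is the nontriviality bookkeeping under $\mu_i = \lambda_i \alpha_i$ and the edge case $(b,c) = (0,0)$ in the coplanarity relation.
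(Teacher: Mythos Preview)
Your proof is correct and follows essentially the same route as the paper's: both cite the preceding proposition for the forward implication and, for the converse, reduce via the decomposition $z_i = s_i^O + \epsilon\,\mathcal{s}_i(O)$ at a point $O$ on one axis to a linear-algebra problem in $V$, exploiting that coplanarity forces the moment vectors $\mathcal{s}_i(O)$ to be parallel so that only two scalar constraints remain on three real unknowns. The paper's version is compressed to a single sentence (``rescale one screw so that the total resultant vanishes, then there is still freedom \ldots''), whereas you carry out the bookkeeping explicitly via the substitution $\mu_i=\lambda_i\alpha_i$ and the characterization of coplanarity as linear dependence of $\{e,\,P_2-P_1,\,P_3-P_1\}$; this makes the nontriviality of the resulting relation transparent.
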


\begin{proof}
Suppose they are $\mathbb{R}$-linear dependent then they are coplanar by the previous result.

%. It is not restrictive to consider the case $z_1+z_2+z_3=0$ (the real rescaling of a sliding vector is a sliding vector). Clearly, the sum of the screw fields of $z_2$ and $z_3$ on the axis of $z_1$ can be zero only if the axes are coplanar.
Conversely, if they are coplanar rescale one screw so that the total resultant vanishes, then there is still freedom to change the individual resultants to get a vanishing total screw field over one axis and hence everywhere.
\end{proof}

\subsection{Law of cosines}

\begin{theorem}
For any  three proper screws in equilibrium, i.e.\ $x+y+z=0$, $x,y,z \in M\backslash \epsilon M$ we have
\begin{equation} \label{cos}
z^2=x^2+y^2-2 \vert x \vert \, \vert y \vert\cos \alpha_{x y}.
\end{equation}
where   $\alpha:=\pi-\Theta$ (with $\Theta$  the `complementary' angle) so that $x\circ y= \vert x \vert \vert y \vert \cos \Theta_{xy}=- \vert x \vert \vert y \vert \cos \alpha_{xy}$. We have also the analogous formulas obtained permuting $x,y,z$.
\end{theorem}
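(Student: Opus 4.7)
The plan is to mimic exactly the classical vector-space proof of the law of cosines, taking advantage of the fact that the $\mathbb{D}$-bilinear scalar product $\circ$ behaves formally like an ordinary scalar product. The equilibrium condition $x+y+z=0$ gives $z=-(x+y)$, and from this point on I expect every step to be a direct translation of the Euclidean argument, with the only subtlety being the sign bookkeeping between $\Theta$ and its complement $\alpha$.

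First I would compute
\begin{equation*}
z\circ z = (x+y)\circ(x+y) = x\circ x + 2\, x\circ y + y\circ y = x^2 + y^2 + 2\, x\circ y,
\end{equation*}
using $\mathbb{D}$-bilinearity and symmetry of $\circ$. Next I would invoke formula (\ref{biq}), applied to $x$ and $y$ (both in $M\setminus\epsilon M$, so that $|x|,|y|$ are well-defined invertible dual numbers and $\Theta_{xy}$ is the dual angle in the sense defined in the preceding subsection), to write
\begin{equation*}
x\circ y = |x|\,|y|\,\cos\Theta_{xy}.
\end{equation*}
Then, since by hypothesis $\alpha_{xy}=\pi-\Theta_{xy}$ and the extension to $\mathbb{D}$ of $\cos$ satisfies $\cos(\pi-\Theta)=-\cos\Theta$ (this identity transfers from $\mathbb{R}$ to $\mathbb{D}$ because $\cos$ is real-analytic and the power-series identities carry over to the dual extension, as recalled in the Introduction), we get $x\circ y = -|x|\,|y|\,\cos\alpha_{xy}$, and the desired formula (\ref{cos}) follows. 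The analogous formulas obtained by permuting $x,y,z$ are immediate, since the equilibrium relation $x+y+z=0$ is symmetric in the three screws.

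The only step requiring any care, and the one I would treat as the "main obstacle," is justifying that formula (\ref{biq}) applies: we need $x$ and $y$ to be proper screws (which is given), so that $|x|$ and $|y|$ are nonzero dual numbers with positive real part, and the dual angle $\Theta_{xy}$ is well defined by the prescription in the preceding subsection (either $x$ and $y$ have $\mathbb{D}$-linearly independent resultants, or their resultants are proportional, in which case $\mathfrak{Du}(x\circ y/(|x||y|))=0$ and $\Theta_{xy}\in\{0,\pi\}$, so formula (\ref{biq}) still holds). Once this is in place, the argument is literally the classical vector-space proof transferred to the $\mathbb{D}$-module $(M,\circ,\mathrm{OR})$, which is itself a concrete manifestation of the transference principle formulated in the Introduction.
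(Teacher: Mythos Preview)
Your proof is correct and follows essentially the same route as the paper: square the relation $x+y=-z$ using $\mathbb{D}$-bilinearity and symmetry of $\circ$, then invoke Eq.~(\ref{biq}) and the identity $\cos(\pi-\Theta)=-\cos\Theta$. Your extra justification that Eq.~(\ref{biq}) applies (distinguishing the cases of independent versus proportional resultants) is more explicit than the paper's, which simply cites the formula, but the underlying argument is identical.
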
 $\empty$

 The fact that they sum up to zero means that they form a `triangle' in $M$.

 \begin{proof}
  Squaring $x+y=-z$  and using Eq.\ (\ref{biq}) gives
\[
x\circ x+y\circ y+2 \vert x \vert\, \vert y \vert\cos \Theta_{x y}=z\circ z
\]
which can also be rewritten in the more standard form of Eq.\ (\ref{cos}).
 \end{proof}
 %Let us denote $\alpha=\pi-\Theta$ (as $\Theta$ is really the `complementary' angle) so that $x\circ y= \vert x \vert \vert y \vert \cos \Theta_{xy}=- \vert x \vert \vert y \vert \cos \alpha_{xy}$.
%\begin{equation} \label{cos}
%z^2=x^2+y^2-2 \vert x \vert \vert y \vert\cos \alpha_{x y}.
%\end{equation}
 Equation (\ref{cos}) is the law of cosines for three screws in equilibrium which specializes to the well known law of cosines for triangles (the sides of the triangle can be regarded as  three free vectors whose sum is zero, the standard law of cosines being a result of vector geometry).

\subsection{Law of sines}

\begin{theorem}
For any  three proper screws in equilibrium, i.e.\ $x+y+z=0$, $x,y,z \in M\backslash \epsilon M$ we have
\begin{equation} \label{sin}
\frac{  \sin \alpha_{xy} }{\vert z \vert}=  \frac{\sin \alpha_{yz} }{\vert x\vert } =  \frac{ \sin \alpha_{zx} }{\vert y \vert }.
\end{equation}
\end{theorem}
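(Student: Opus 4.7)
The plan is to imitate the standard vector-geometry proof of the law of sines by cross-multiplying the equilibrium relation $x+y+z=0$ and then taking moduli via Eq.\ (\ref{vqx}).

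First I would observe that cross-multiplying $x+y+z=0$ on the left by $x$ and by $y$ respectively, and using $v\times v=0$, yields the three-fold identity
\[
x\times y \;=\; y\times z \;=\; z\times x.
\]
(For instance, $x\times(x+y+z)=0$ gives $x\times y=-x\times z=z\times x$, and $y\times(x+y+z)=0$ gives $y\times z=x\times y$.) Since none of $x,y,z$ lies in $\epsilon M$ (they are proper screws), any two of them that are $\mathbb{D}$-linearly dependent must be proportional (otherwise the sum of the three could not vanish, as their resultants would have to add to zero while two resultants are parallel); in the generic case the resultants are pairwise linearly independent, which is the hypothesis under which Eq.\ (\ref{vqx}) was established. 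The degenerate sub-cases where two of $x,y,z$ are proportional can be handled by the same formula, which was noted to persist for linearly dependent resultants.

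Next I would take the modulus of each expression in the chain of equalities using Eq.\ (\ref{vqx}):
\[
\vert x\vert\,\vert y\vert\,\sin\Theta_{xy}
\;=\; \vert y\vert\,\vert z\vert\,\sin\Theta_{yz}
\;=\; \vert z\vert\,\vert x\vert\,\sin\Theta_{zx}.
\]
These are equalities of dual numbers, and the modulus is well defined on each side since each $\sin \Theta$ has nonnegative real part (equal to the sine of the real angle between the resultants), and vanishes only in cases where one may reduce via the linear-dependence analysis.

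Finally, dividing each member of the displayed identity by $\vert x\vert\,\vert y\vert\,\vert z\vert$ (which is invertible in $\mathbb{D}$ because each $\vert \cdot \vert$ has positive real part, as $x,y,z\in M\backslash \epsilon M$) produces
\[
\frac{\sin\Theta_{xy}}{\vert z\vert}
\;=\;\frac{\sin\Theta_{yz}}{\vert x\vert}
\;=\;\frac{\sin\Theta_{zx}}{\vert y\vert},
\]
and the replacement $\alpha=\pi-\Theta$, together with $\sin(\pi-\Theta)=\sin\Theta$ (which holds formally for dual arguments by the Taylor expansion definition of $\sin$ on $\mathbb{D}$), delivers Eq.\ (\ref{sin}). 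The main obstacle to watch for is the justification that $\vert x\vert$, $\vert y\vert$, $\vert z\vert$, and the various $\sin\Theta$ factors are invertible (or at least that their ratios make sense) in $\mathbb{D}$, which reduces to checking that the real parts are nonzero; this is exactly where the hypothesis that $x,y,z$ are proper screws, together with the cases analysis of linear dependence, is used.
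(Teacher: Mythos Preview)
Your proposal is correct and follows essentially the same approach as the paper: cross-multiply the equilibrium relation to obtain $x\times y=y\times z=z\times x$, apply Eq.\ (\ref{vqx}) to take moduli, divide by $\vert x\vert\,\vert y\vert\,\vert z\vert$, and pass from $\Theta$ to $\alpha$ via $\sin(\pi-\Theta)=\sin\Theta$. Your extra case analysis on linear dependence is unnecessary (and slightly muddled), since you yourself note that Eq.\ (\ref{vqx}) was recorded to hold even for linearly dependent resultants; the paper simply invokes (\ref{vqx}) directly without the digression.
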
 $\empty$

The angle $\alpha$ between screws is  defined above in the law of cosines section.

\begin{proof}
Cross multiplying $x+y=-z$ by $y$ gives
\[
x\times y=y\times z
\]
and using Eq.\ (\ref{vqx})
\[
\vert x \vert\, \vert y \vert  \sin \Theta_{xy}  =\vert y \vert \,\vert z \vert  \sin \Theta_{yz} .
\]
 We conclude
\begin{equation}
\frac{  \sin \Theta_{xy} }{\vert z \vert}=  \frac{\sin \Theta_{yz} }{\vert x \vert } =  \frac{ \sin \Theta_{zx} }{\vert y \vert}
\end{equation}
where the last equality follows from the fact that the starting equation $x+y+z=0$ is invariant under permutations of $x,y,z$. This is the law of sines for screws, which can also be written in the more standard form of Eq.\ (\ref{sin}).
\end{proof}

%Let us consider three screws in equilibrium, i.e.\ $x+y+z=0$, $x,y,z \in M\backslash \epsilon M$.
%\begin{equation}
%\frac{  \sin \alpha_{xy} }{\vert z \vert}=  \frac{\sin \alpha_{yz} }{\vert x\vert } =  \frac{ \sin \alpha_{zx} }{\vert y \vert }
%\end{equation}

Observe that, denoting with $2R (\in \mathbb{D})$ the constant appearing on the right-hand side of Eq.\ (\ref{sin}), we have
\[
4 R^2\vert x \vert^2\,  \vert y\vert^2 \, \vert z\vert^2=\vert x \vert^2 \, \vert y\vert^2-(x \circ y)^2=\vert y \vert^2 \, \vert z\vert^2-(y \circ z)^2=\vert z \vert^2 \, \vert x\vert^2-(z \circ x)^2 .
\]

\subsection{Sum of angles in a triangle}

To be precise, the  following result is not a complete manifestation of our version of the transference principle as the proof in the vector geometry case is complete already at Eq.\ (\ref{bkq}). It has to be completed to show that the sum of the angles has no dual part.\\

%The fact that the sum of the angles has no dual part is thus quite non-trival.

\begin{proposition}
For three screws in equilibrium, i.e.\ $x+y+z=0$, $x,y,z \in M\backslash \epsilon M$, we have
\begin{equation}
\alpha_{xy}+\alpha_{yz}+\alpha_{zx}=\pi.
\end{equation}
\end{proposition}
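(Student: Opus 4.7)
The plan is to substitute $\alpha_{ij}=\pi-\Theta_{ij}$ and reduce the statement to the dual-number identity $\Theta_{xy}+\Theta_{yz}+\Theta_{zx}=2\pi$. My opening observation, which mirrors the vector-geometry proof, is that crossing $x+y+z=0$ by $x$ and then by $y$ exhibits a common value
\[
w:=x\times y=y\times z=z\times x.
\]
Together with (\ref{zjs}), this upgrades (\ref{sin}) to the sharp law of sines $|x||y|\sin\Theta_{xy}=|y||z|\sin\Theta_{yz}=|z||x|\sin\Theta_{zx}$ as an equality of dual numbers; a short preliminary step checks that the three candidate unit screws $u_{xy},u_{yz},u_{zx}$ appearing in (\ref{zjs}) actually coincide, because $r_x\times r_y$, $r_y\times r_z$, $r_z\times r_x$ are all equal positive real multiples of a single direction.

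From there I would compute $\cos(\Theta_{xy}+\Theta_{yz})$ and $\sin(\Theta_{xy}+\Theta_{yz})$ via the addition formulas. The cosine falls out of (\ref{ciq}): writing
\[
w\circ w=(x\times y)\circ(y\times z)=(x\circ y)(y\circ z)-(x\circ z)|y|^2
\]
and comparing with $\cos\Theta_{xy}\cos\Theta_{yz}=(x\circ y)(y\circ z)/(|x||y|^2|z|)$, the $(x\circ y)(y\circ z)$ terms cancel and one is left with $\cos(\Theta_{xy}+\Theta_{yz})=(x\circ z)/(|x||z|)=\cos\Theta_{zx}$. The sine is handled by using the sharp law of sines to factor out $\sin\Theta_{xy}$ from $\sin\Theta_{xy}\cos\Theta_{yz}+\cos\Theta_{xy}\sin\Theta_{yz}$; the remaining bracket collapses via $y\circ(x+z)=-|y|^2$ to give $\sin(\Theta_{xy}+\Theta_{yz})=-\sin\Theta_{zx}$. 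In ordinary vector geometry these two identities already close the argument.

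The step genuinely new to the dual case is to rule out a nontrivial dual part. Combined, the two identities yield $\cos(\Theta_{xy}+\Theta_{yz}+\Theta_{zx})=1$ and $\sin(\Theta_{xy}+\Theta_{yz}+\Theta_{zx})=0$; writing $A=a+\epsilon b$ and expanding $\cos A=\cos a-\epsilon b\sin a$ and $\sin A=\sin a+\epsilon b\cos a$ forces $a\in 2\pi\mathbb{Z}$ and then $b=0$, so the sum equals $2k\pi$ exactly. The real part is pinned by projecting to $V$: since $r_x+r_y+r_z=0$, the resultants form a closed plane triangle whose interior angles $\pi-\theta_{ij}$ sum to $\pi$, giving $\theta_{xy}+\theta_{yz}+\theta_{zx}=2\pi$ and so $k=1$. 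Returning to $\alpha=\pi-\Theta$ produces the stated identity. The main obstacle I anticipate is the sign bookkeeping in the sharp law of sines: one needs $u_{xy}$, $u_{yz}$, $u_{zx}$ to agree as oriented unit screws, not merely as unoriented axes, so that the $\sin\Theta_{ij}$ extracted from (\ref{zjs}) are mutually compatible; once that orientation check is in place, every other step is routine dual-number algebra.
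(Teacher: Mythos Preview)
Your argument is correct and follows a cleaner route than the paper's own proof. Both proofs share the same overall architecture: compute both the cosine and the sine of the total angle sum, use the resulting pair of equations to kill the dual part, and then pin down the real part by projecting to $V$. The paper, however, works directly with the $\alpha_{ij}$ and the full triple-angle identities for $\cos(A+B+C)$ and $\sin(A+B+C)$, substituting the law of sines and $z=-x-y$ into the resulting expressions and grinding through several lines of polynomial algebra to reach $\cos=-1$ and $\sin=0$. Your approach instead exploits the identity $w:=x\times y=y\times z=z\times x$ to obtain the sharp law of sines and then proceeds via two-angle addition formulas, so the cancellations are visible step by step: $\cos(\Theta_{xy}+\Theta_{yz})=\cos\Theta_{zx}$ falls out of $(\ref{ciq})$ almost immediately, and the sine step reduces to the single substitution $y\circ(x+z)=-|y|^2$. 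This buys you substantially lighter bookkeeping and makes the transference from the vector-geometry proof more transparent. One small imprecision: you write that $r_x\times r_y$, $r_y\times r_z$, $r_z\times r_x$ are ``equal positive real multiples of a single direction'', but in fact they are all \emph{equal} as vectors (just cross $r_x+r_y+r_z=0$ by each $r_i$); this actually strengthens your orientation check. The degenerate case of parallel resultants, where $w\in\epsilon M$ and the $u_{ij}$ are undefined, is harmless since all the sines vanish and the identities hold trivially.
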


\begin{proof}
Notice that every equation  $f(d_1,\cdots,d_n)=0$ with $f$ real analytic and $d_i$ dual numbers, in order to hold, needs only to hold in its real part \cite{dimentberg68}. This is because the derivative of a zero function is zero. It follows that the next identity with $A,B,C\in \mathbb{D}$ holds, as it holds for real angles,
\[
\cos(A+B+C)=\cos A\cos B\cos C-\sum_{\textrm{cyclic}} \sin A\sin B \cos C,
\]
where the (cyclic) sum is over all the even permutations of $(A,B,C)$.
Alternatively, it can be proved by expanding $\cos (a+\epsilon b)=\cos a-\epsilon b \sin a$, and similarly for the sine.
Thus
\begin{align*}
\cos (\alpha_{xy}+\alpha_{yz}+\alpha_{zx})&=\cos \alpha_{xy}\,\cos  \alpha_{yz}\,\cos \alpha_{zx} -\sum_{\textrm{cyclic}} \sin \alpha_{xy}\,\sin \alpha_{yz} \,\cos \alpha_{zx}\\
&= -\frac{x \circ y}{\vert x\vert \, \vert y\vert} \, \frac{y \circ z}{\vert y\vert \, \vert z\vert} \,  \frac{z \circ x}{\vert z\vert\,  \vert x\vert} -\sum_{\textrm{cyclic}} \frac{\vert z \vert }{\vert x \vert }\,\sin^2 \alpha_{yz} \,\cos \alpha_{zx}
\end{align*}
where in the last term we used the law of sines. Using
\[
\sin^2 \alpha_{yz}=\left[\vert y \vert ^2 \, \vert z \vert ^2-(y\circ z)^2\right]/\vert y \vert ^2 \, \vert z \vert ^2=4 R^2 \vert x\vert^2
\]
we get
\begin{align*}
\cos (\alpha_{xy}+\alpha_{yz}+\alpha_{zx})&= -\frac{(x \circ y)\, (y\circ z) \, (z \circ x)}{\vert x\vert^2 \, \vert y\vert^2 \, \vert z \vert^2}  -\sum_{\textrm{cyclic}} 4R^2 \vert z \vert \vert x \vert  \,\cos \alpha_{zx}\\
&= -\frac{(x \circ y)\,(y\circ z)\, (z \circ x)}{\vert x\vert^2 \, \vert y\vert^2 \, \vert z \vert^2} + 4R^2 (x \circ y+y \circ z+z \circ x) ,
\end{align*}
and using $z=-x-y$ we arrive at
\begin{align*}
&\cos (\alpha_{xy}\!+\!\alpha_{yz}\!+\!\alpha_{zx})=\! -\frac{(x \circ y)\,(x \circ y\!+ \! y^2)\,(x\circ y\! + \! x^2) }{\vert x\vert^2 \, \vert y\vert^2 \, \vert z \vert^2} - 4R^2 (x \circ y\!+\!x^2\!+\!y^2)\\
&= \!-\frac{(x \circ y)\,(x \circ y\!+\! y^2)\,( x\circ y \!+\! x^2)+( x^2  y^2-(x \circ y)^2)\,(x \circ y\!+\!x^2\!+\!y^2)}{\vert x\vert^2 \, \vert y\vert^2 \, \vert z \vert^2}=-1 .
\end{align*}

Note that $0\le\mathfrak{Re}(\alpha_{xy}+\alpha_{yz}+\alpha_{zx})\le 3\pi$ and they cannot all be $\pi$ otherwise the resultants would be all aligned and could not sum to zero unless they are all zero, which is excluded.  We arrive at the preliminary result
\begin{equation} \label{bkq}
\alpha_{xy}+\alpha_{yz}+\alpha_{zx}-\pi\in \epsilon \mathbb{R}.
\end{equation}

Now we use
\begin{equation} \label{voa}
\sin(A+B+C)=-\sin A\sin B\sin C+\sum_{\textrm{cyclic}} \cos A\cos B \sin C.
\end{equation}
\begin{align*}
&\sin (\alpha_{xy}+\alpha_{yz}+\alpha_{zx}) = -(2R)^3\vert x\vert \, \vert y\vert \, \vert z\vert+ k \sum_{cycl}  \left(\frac{-x \circ y}{\vert x\vert \, \vert y\vert}\right)\left( \frac{-y \circ z}{\vert y\vert \, \vert z\vert} \right) \vert y\vert\\
&=\!-(2R)^3\vert x\vert \, \vert y\vert \, \vert z\vert+ \frac{2R}{\vert x\vert \, \vert y\vert \, \vert z\vert} [(x\circ y)\,(y\circ z)\!+\!(y\circ z)\,(z\circ x)\!+\!(z\circ x)\,(x\circ y)]\\
&=\!\frac{2R}{\vert x\vert\,  \vert y\vert \, \vert z\vert}\left\{-4R^2 x^2 y^2 z^2+[(x\circ y)\,(y\circ z)\!+\!(y\circ z)\,(z\circ x)\!+\!(z\circ x)\,(x\circ y)]\right\}
\end{align*}
Using $4R^2 x^2  y^2  z^2=x^2 y^2-(x \circ y)^2$ and $z=-x-y$ we arrive at
\begin{align*}
&\sin (\alpha_{xy}+\alpha_{yz}+\alpha_{zx})=\frac{2R}{\vert x\vert \, \vert y\vert\, \vert z\vert} \Big\{-x^2 y^2+(x \circ y)^2 \\
&-\left[(x\circ y)\, (y\circ x\!+\!y^2)-(y \circ x\!+\!y^2)\,(y\circ x\!+\!x^2)+(y\circ x\!+\!x^2)\,(x\circ y)\right]\Big\}=0
\end{align*}
since $\sin (\pi+\epsilon b)=-\epsilon b$, this proves
$
\alpha_{xy}+\alpha_{yz}+\alpha_{zx}=\pi.
$
Alternatively, one could have used Prop.\ \ref{jiq} to show first that the three translations associated to the three angles are actually aligned (at least when the resultants are linearly independent), and then complete the proof from there.
\end{proof}

\subsection{Petersen-Morley theorem}

In this section we review a fairly elementary  proof of the Petersen-Morley theorem \cite{todd36}   and show that it is in fact the analog of a theorem in vector geometry. The Petersen-Morley theorem is a statement about lines in space in  generic position, but it is most easily expressed in terms of screws as done below, Theorem \ref{kke}, where the line is the axis of the screw.

We recall that two screws $x,y \in M\backslash \epsilon M$ are normal if $x\circ  y=0$ which is equivalent to the fact that their axes intersect orthogonally. We denote with $n_{xy}$ the line normal to both axes.
Observe that for non-proportional resultants, $x \times y$ has axis which is the normal to both the axes of $x$ and $y$. Moreover, by Prop.\ \ref{jiq} three screws in equilibrium $a+b+c=0$ with linearly independent resultants have necessarily a common normal $n:=n_{ab}=n_{bc}=n_{ca}$.

Observe that these results hold also for vectors, i.e.\ elements of $(V,\cdot, \textrm{or})$, with obvious adjustments in the meaning of the terminology.

The Petersen-Morley theorem is just a consequence of the Jacobi identity
\begin{equation}
x \times (y\times z)+z \times (x\times y)+y \times (z\times x)=0,
\end{equation}
namely it follows from the fact that the screws
\[
a:=x \times (y\times z), \quad b:=z \times (x\times y), \quad c:=y \times (z\times x)
\]
are in equilibrium. We shall assume that $x,y,z$ are {\em generic} meaning that none of their cross products nor $a,b,c$ vanish. This is a rather weak condition of genericity as there are instances in which the axes of $x,y,z$ could even be coplanar (see below).

Let $n_{(xy)z}$ be the line normal to both $n_{xy}$ and the axis of $z$.\\

\begin{theorem}[Petersen-Morley] \label{kke}
If $x,y,z\in M\backslash \epsilon M$ are generic then $n_{(xy)z}$, $n_{(zx)y}$, $n_{(yz)x}$ have a common normal.
\end{theorem}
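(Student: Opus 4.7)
The plan is to realize the Petersen--Morley theorem as an immediate consequence of the Jacobi identity \eqref{hhf}, exactly as in the vector-space setting (thus exemplifying the transference principle). Following the hint already laid out in the excerpt, I set
\[
a := x \times (y\times z), \qquad b := z \times (x\times y), \qquad c := y \times (z\times x),
\]
and use the Jacobi identity to deduce $a+b+c=0$. The genericity assumption guarantees that $a,b,c$ are nonzero proper screws, so that each has a well-defined screw axis.

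The key step is to identify the axis of each of $a$, $b$, $c$ with one of the lines appearing in the statement. For this I would first invoke the fact established earlier that, whenever $u,v\in M\setminus \epsilon M$ have $\mathbb{D}$-linearly independent resultants, the axis of $u\times v$ is precisely the common normal $n_{uv}$ of the axes of $u$ and $v$ (this was derived right before Eq.~\eqref{zjs} via the computation $u\times v \circ u = u\times v\circ v = 0$). Applying this twice to $a = x\times(y\times z)$: the axis of $y\times z$ is $n_{yz}$, and hence the axis of $a$ is the common normal of $x$ and $n_{yz}$, which by the definition in the preamble to the theorem is $n_{(yz)x}$. By the symmetric role of $x,y,z$ the same reasoning gives that the axis of $b$ is $n_{(xy)z}$ and the axis of $c$ is $n_{(zx)y}$. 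The genericity condition (none of $x\times y$, $y\times z$, $z\times x$, $a,b,c$ vanishes) is precisely what is needed for these identifications to make sense.

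Having identified the three axes, the final step is to apply Prop.~\ref{vbf}: three proper screws satisfying $a+b+c=0$ have axes that intersect orthogonally a common line. This common line is the sought-after common normal of $n_{(yz)x}$, $n_{(xy)z}$, $n_{(zx)y}$, proving the theorem.

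The only genuine subtlety, and thus the main obstacle, is checking that the ``intersect orthogonally a common line'' alternative in Prop.~\ref{vbf} actually produces a line that is \emph{normal} to all three axes — i.e.\ that one should not worry about the degenerate ``parallel and coplanar'' case. But Prop.~\ref{vbf} covers both possibilities (parallel axes are coplanar and so admit a common normal as well), so the conclusion holds uniformly under the stated genericity hypothesis. Beyond this, the entire argument is a bookkeeping check that parallels the trivial vector-geometry identity obtained by applying the Jacobi identity to three free vectors, confirming that Petersen--Morley is the screw-theoretic transference of a vacuous statement in $(V,\cdot,\mathrm{or})$.
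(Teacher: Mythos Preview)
Your proposal is correct and follows essentially the same approach as the paper: identify the axes of $a$, $b$, $c$ as $n_{(yz)x}$, $n_{(xy)z}$, $n_{(zx)y}$ respectively via the characterization of the axis of a cross product, then invoke Prop.~\ref{vbf} on the equilibrium $a+b+c=0$ coming from the Jacobi identity. Your added remark on the parallel/coplanar degenerate case is a reasonable clarification, though the paper's genericity hypothesis is intended to sidestep it.
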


\begin{proof}
The axis of $y\times z$ is $n_{yz}$ thus that of $a$ is $n_{(yz)x}$. Since $a+b+c=0$ the theorem follows from Prop.\ \ref{vbf}.
\end{proof}

Obviously this is actually the translation of a result in vector geometry, still following from the Jacobi identity in precisely the same way, the axis of a vector being read as its span, and $x,y,z\in V\backslash 0$.

%Moreover,  for a vector $v\in V\backslash 0$ we say that its span is its axis.

\subsection{Thales' theorem}
In geometry, Thales's theorem states that if $A$, $B$, and $C$ are distinct points on a circle where the line $AB$ is a diameter, the angle $BCA$ is a right angle. We are going to obtain a screw theory analog.

Let $r\in \mathbb{D}$, $\mathfrak{Re}(r)>0$. The sphere  of radius $r$ is the locus $S(r):=\{z\in M: \vert z\vert=r\}$.
Observe that the sphere condition implies that $S(r)\cap \epsilon M=\emptyset$ but does not place any restriction on the screw axis, it just fixes  the modulus of the resultant and the pitch.\\

\begin{theorem}
Let $x,y,z\in S(r)$. If $x=-y$ then $(y-z)\circ (z-x)=0$.
\end{theorem}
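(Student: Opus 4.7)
The plan is to prove this by a direct algebraic expansion, mimicking the classical proof of Thales' theorem in Euclidean vector geometry (where one writes $\vec{CA}\cdot\vec{CB}$ and uses that $A$ and $B$ are antipodal on the sphere). The $\mathbb{D}$-bilinearity and symmetry of $\circ$ make this essentially identical to the real case.

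First, I would substitute $x=-y$ into the expression $(y-z)\circ(z-x)$, which gives $(y-z)\circ(z+y)$. Then, expanding by $\mathbb{D}$-bilinearity and using the symmetry of the scalar product, the mixed terms $y\circ z$ and $-z\circ y$ cancel, leaving
\begin{equation*}
(y-z)\circ(z+y) = y\circ y - z\circ z = y^2 - z^2.
\end{equation*}
Since $y,z\in S(r)$, by definition $\vert y\vert = \vert z\vert = r$, and therefore $y^2 = \vert y\vert^2 = r^2 = \vert z\vert^2 = z^2$. Hence the right-hand side vanishes.

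There is no real obstacle here: the proof uses only $\mathbb{D}$-bilinearity of $\circ$, symmetry, and the definition of the sphere $S(r)$. It is worth remarking (as the paper stresses in its formulation of the transference principle) that none of these steps invokes the invertibility of the $\mathbb{D}$-coefficients, which is exactly why the classical vector-geometry proof carries over verbatim to screws. Geometrically, $x=-y$ plays the role of "antipodal points on the sphere" (a diameter), and the conclusion $(y-z)\circ(z-x)=0$ is the $\mathbb{D}$-module analog of the right-angle condition at $z$.
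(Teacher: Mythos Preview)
Your proof is correct and is essentially identical to the paper's: the paper writes $-(y-z)\circ(z-x)=(z+x)\circ(z-x)=z^2-x^2=r^2-r^2=0$, which is the same direct bilinear expansion (just substituting $y=-x$ rather than $x=-y$, a purely cosmetic difference).
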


\begin{proof}
Indeed $-(y-z)\circ (z-x)=(z+x)\circ (z-x)=z^2-x^2=r^2-r^2=0$.
\end{proof}

\begin{remark}
This trivial result has some unexpected consequences. We give a kinematical interpretation. We assume the reader to be familiar with the fact that the rigid motion at each instant is described by a screw field of velocities, and that the composition of motions corresponds to the sum of the associated kinematical screws \cite{minguzzi12}.

Thus, consider three frames/observers $A,B,C$ in motion with respect to an inertial frame.
%Thus consider a family of frames/observers in motion with respect to an inertial frame.
Suppose that all frames have equal modulus of the angular velocity and equal modulus of the translational velocity (invariant $\vert v_\parallel\vert$,   which follows from constant pitch). Note that the instantaneous axes of rotation of the frames do not need to pass through the same point.

Furthermore, suppose that two observers, say $A$ and $B$, have opposite kinematical screws (opposite motions) in the inertial frame, hence sharing the same screw axis.
The theorem states that with respect to the third observer ($C$) frames $A$ and $B$ are observed having instantaneous axes of rotation that intersect orthogonally.

%if we jump on a frame of the family and observe two observers that had originally two opposite motions we can observe that their instantaneous axes of rotation intersect orthogonally.
\end{remark}

\section{Conclusions}

We have shown that screw theory can be developed from a $\mathbb{D}$-module of rank 3 endowed with a scalar product and an orientation $(M,\circ, OR)$ (a triple that we termed {\em $\mathbb{D}$-module geometry}). The Euclidean space $E$ is constructed out of certain real $\langle,\rangle$-null planes and each element of the module gets represented by a screw field over such Euclidean space. In short we proved that $\mathbb{D}$-module geometry leads to Euclidean   geometry where the elements of the  $\mathbb{D}$-module get identified with screw fields over $E$. This result shows that the introduction of the dual unit $\epsilon$, $\epsilon^2=0$, leads from vector geometry to Euclidean geometry, that is, it brings us to the affine space while preserving the linear algebraic structure.

The introduced $\mathbb{D}$-module geometry  formalism for dealing with screw theory preserves the effectiveness of screw calculus while removing any reference to reduction points. In this formalism dual vectors can be completely avoided, as one can advantageously learn how concepts from Euclidean geometry correspond to concepts of $\mathbb{D}$-module geometry. For instance, orthonormal frames in $E$  correspond to orthonormal bases on $M$.
This new formalism  has some  advantages, for instance, it clarifies the emergence of the principle of transference as a consequence of the  formal similarity between  $\mathbb{D}$-module geometry $(M,\circ, OR)$ and vector geometry $(V,\cdot, or)$.

Applications of the $\mathbb{D}$-module formalism to mechanics, particularly robotics, and geometry will be given in a next work.

\end{document}